\documentclass[showpacs,pra,superscriptaddress,amsmath,%
amsfonts,amssymb,a4paper,twocolumn]{revtex4}

\usepackage[english]{babel}

\usepackage{amsthm}

\DeclareMathOperator{\tr}{Tr}

\newtheorem{cor}{Corollary}

\newtheorem{prop}{Proposition}

\usepackage{euscript,amssymb,amsbsy}

\newcommand{\be}[1]{\begin{equation}\label{#1}}
\newcommand{\ee}{\end{equation}}
\newcommand{\ba}[1]{\begin{eqnarray}\label{#1}}
\newcommand{\ea}{\end{eqnarray}}
\newcommand{\rf}[1]{(\ref{#1})}
\newcommand{\nn}{\nonumber}

\def\a{\alpha}
\def\b{\beta}
\def\d{\delta}
\def\g{\gamma}
\def\G{\Gamma}
\def\sg{\sigma}

\def\o{\omega}

\def\ra{\rangle}
\def\la{\langle}
\def\dg{\dagger}

\newcommand{\wt}{\widetilde}

\begin{document}

\title{Minimum error discrimination problem for pure qubit states}

\author{Boris F Samsonov}
\affiliation{
 Physics Department, Tomsk State University, 36 Lenin Avenue,
634050 Tomsk, Russia}

\begin{abstract}
The necessary and sufficient conditions for minimization of
the generalized rate error for discriminating among $N$ pure
qubit states are reformulated in terms of Bloch vectors
representing the states.
For the direct optimization problem
an algorithmic solution to these
conditions is indicated.
A solution to the inverse optimization problem is given.
General results are widely illustrated by particular cases
of equiprobable states and $N=2,3,4$ pure qubit states
given with different prior probabilities.
\end{abstract}

\maketitle

\section{Introduction}

Essential advances in experimental techniques in recent
years made possible high precision measurements which are
required for distinguishing among separate
nonorthogonal quantum states.
These studies are stimulated by needs of quantum
communication and quantum cryptography where the
discrimination of quantum states is one of the key
problems.
In a more general context, this problem underlies
many of the communication and computing
schemes that have been suggested so far \cite{BHH}.
According to quantum-mechanical laws there is no way
to discriminate perfectly among nonorthogonal states.
Therefore, an actual problem is the state discrimination
in an optimal way.
Currently, different criteria of optimality are formulated
(for a recent review, see \cite{Barnett-Croke}).
In this paper we will consider discrimination strategies
 based on the minimization of the generalized rate error
(or more generally, the mean Bayes' cost \cite{Helstr,Chefless}).
They are known
in the literature as the
{\em minimum-error discrimination} strategies.

Necessary and sufficient conditions for
realizing  the minimum-error (generalized)
measurement were formulated independently
 by Holevo \cite{Hol,Hol2} and
 Yuen {\em et al}
\cite{Yuen} (see also a recent appealingly simple proof in
\cite{Barnett-Croke-JPA}).
Unfortunately, as Hunter \cite{Hunter} pointed out,
these conditions do
not provide a great insight into either the form of
minimum-error measurement strategies, or into how error probability
depends on the set of possible states.
Moreover, the only strategy derived from these conditions
corresponds to two possible states \cite{Helstr}.
All other optimal solutions, except the one proposed by Hunter
\cite{Hunter} for equiprobable qubit states,
were first postulated and then shown to
satisfy these conditions.
Actually, Hunter's solution \cite{Hunter} is based on
solving these conditions only partly and
consists in the formulation of a two-step procedure.
First, he proposed to find some auxiliary operators
and then to check which of these
operators correspond to possible measurements.
In this way, he was able to find a solution for equiprobable
qubit states with a clear geometrical visualization.
We have to note that our approach
 based on the full solution of the necessary
and sufficient conditions
permitted us to draw slightly different conclusions
compared to these published in \cite{Hunter}.

In the current paper,
we first formulate conditions,
to which the Lagrange
operator should satisfy,
when a measurement strategy remains to be optimal after a
given set of states is enlarged by new states.
Then
we reformulate the known necessary and
sufficient optimization conditions
 for pure qubit states given with
arbitrary prior probabilities
in terms of Bloch vectors representing the states.
This permitted us to indicate an algorithmic solution of the
(direct) optimization problem.
Furthermore, using the new form of the necessary and sufficient
optimization, conditions we succeeded to solve the
{\em inverse optimization problem},
i.e., the problem of finding all sets
of states and prior probabilities for which a given
measurement strategy is optimal.
For two-qubit states, we found a solution to the
optimization problem for the case when the states are given
with probabilities $p_1$ and $p_2$ such that $p_1+p_2\le1$.
For $p_1+p_2=1$, our
solution reduces to the one first found by
 Helstrom \cite{Helstr}
 (see also \cite{Barnett-Croke}).
Using this result, we derived a condition for $N$ qubit
states $\rho_1$, \ldots, $\rho_N$ when they are optimally
 discriminated by the
same orthogonal measurement which is optimal for
discriminating among the states $\rho_1$ and $\rho_2$.
This permitted us to formulate conditions
for distinguishing projective measurements
from generalized ones for optimal discrimination among
three qubit states.
For $N$ equiprobable states, we have shown that the optimal
(generalized) measurement may contain $M\ge4$
elements of a positive operator-valued measure
(POVM)
 only if there
exists a subset of $M$
states which may form a POVM.
Otherwise, the optimal POVM
contains either three or two elements.
We applied our algorithm to the case of three states, two of
which are given with equal probabilities.
For a particular
case of three mirror symmetric states, our result coincides
with the one previously published
by Anderson et al. \cite{Anderson}.
For
another particular case, we compared our solution to a
solution obtained numerically in \cite{Czhec}. In contrast
to \cite{Czhec}, our approach gives exact analytic bounds
for the prior probability, distinguishing projective optimal
measurement from the generalized one.
As another illustration of our approach, we obtained the
optimal solution for four qubit states, two of which are
given with the probability $p$ and two others with
the probability $1/2-p$.

\section{Minimum error conditions for pure qubit states}

\subsection{Minimum error conditions \label{MEC}}

Assume that we are given $N(\ge2)$ pure qubit
states
\be{1}
\rho_j=\rho_j^\dg=\rho_j^2\ge0\,, \quad \tr\rho_j=1\,,\quad
 j=1,\ldots,N
\ee
and a (discrete) probability
distribution $p_j>0$, $\sum_{j=1}^Np_j=1$.
We assume also that the
operator $\rho_j$
occurs with the probability $p_j$ and
acts on the vectors of a two-dimensional
Hilbert space with the usually defined inner product.
A (generalized, see, e.g., \cite{GM}) measurement that
discriminates between the states $\rho_1$,\ldots,$\rho_N$
can be described with the help of $N$ detection operators
$\hat\pi_1$,\ldots,$\hat\pi_N$ such that
\ba{pih}
\hat\pi_j^\dg=\hat\pi_j\ge 0\,,
\\
\sum_{j=1}^N\hat\pi_j=I& &
\label{I}
\ea
 where $I$ is the identity operator.
 If as a measurement result the detector $\pi_{i_0}$
 clicks we know
 with the
 probability
$\tr(\rho_{i_0}\hat\pi_{i_0})$
that
 the chosen state is $\rho_{i_0}$.
The overall probability of
correctly identifying any of the states $\rho_j$
is then given by (see, e.g., \cite{Barnett-Croke})
\be{Pcorr}
P_{corr}=\sum_{i=1}^Np_i\tr(\hat\pi_i\rho_i)
=\tr\Gamma
\ee
where we introduced
\be{Gam}
\Gamma:=\sum_{j=1}^Np_j\hat\pi_j\rho_j
\ee
and $P_{err}=1-P_{corr}$ is
the overall probability of an erroneous guess.
The minimum error discrimination strategy corresponds to
such a set of detection operators $\hat\pi_j$ that
$P_{err}$ takes its minimal value or equivalently
$P_{corr}$ takes its maximal value.
Operator $\Gamma$ \rf{Gam} is known in the literature as
the {\em Lagrange operator} (see e.g. \cite{Chefless}).
Since its trace defines $P_{corr}$ we will call its matrix
representation in a basis the {\it cost matrix}.

According to \cite{Hol,Yuen},
the necessary and sufficient conditions for a measurement
to be optimal are
\ba{cond1}
\Gamma=\Gamma^\dg\,, & & \\
G_j:=\Gamma&-&p_j\rho_j\ge0\,,\quad  j=1,\ldots,N\,.
\label{cond2}
\ea
As it was proved by Holevo \cite{Hol2}, instead of
$\Gamma$,
one can equivalently use $\wt\Gamma=(\Gamma+\Gamma^\dg)/2$
and at the extremum point $\wt\Gamma=\Gamma$.
Below, we will use also
the following implication of Eq. \rf{cond2}
\cite{Hol2,Hunter,Barnett-Croke}
\be{NC}
G_j\hat\pi_j=0\,,\quad j=1,\ldots,N\,.
\ee
It is not difficult to see that from Eq. \rf{NC}, it follows
that for qubit states,
 if both $\hat\pi_j\ne0$ and $G_j\ne0$, then they both have
a zero eigenvalue and therefore they both are proportional
to projectors (cf. \cite{Hunter}).
Indeed, since $\hat\pi_j$ is Hermitian one can choose a
coordinate system where it is diagonal with at least one
non-zero diagonal entry.
In this system from
\rf{NC}, it follows that at least one column of $G_j$
and one diagonal element of $\hat\pi_j$ should be equal to
zero, i.e., both $\det\hat\pi_j=0$ and $\det G_j=0$
and both these operators are proportional to projectors.
In what follows, we will extensively use this property.

We would like to note here that
for an optimal strategy, some of the
operators $\hat\pi_j$ may become equal to zero.
This
happens if some of the detectors $\hat\pi_j$,
say for definiteness the detectors with the numbers $j=M+1,\ldots,N$,
never click.
In this case,
only the states $\rho_j$, with $j=1,\ldots,M$, enter
the cost matrix $\Gamma$ \rf{Gam} and their probabilities satisfy
the condition $\sum_{j=1}^Mp_j<1$
while the first $M$ POVM elements, $\hat\pi_j$,
$j=1,\ldots,M$, satisfy conditions \rf{pih} and \rf{I} with
$N=M$.
We would like to stress that this property may be formulated in a
more constructive way.
\begin{prop}\label{prop0}
Assume that for $M$ states $\rho_j$,
$j=1,\ldots,M$,
given with the probabilities $p_j$
such that
$\sum_{j=1}^Mp_j<1$,
the detection operators $\hat\pi_j$,
satisfying conditions (\ref{pih}) and (\ref{I}) at $N=M$,
which maximizes the success probability $P_{corr}$
\rf{Pcorr}
for
 the cost matrix $\Gamma$ \rf{Gam} at $N=M$,
are known.
Then for a larger set of $N>M$
states
$\wt\rho_j$, $j=1,\ldots,N$
given with the probabilities $\wt p_j$,
$\sum_{j=1}^N\wt p_j=1$,
such that $\wt\rho_j=\rho_j$
and $\wt p_j=p_j$, $j=1,\ldots,M$,
the optimal POVM contains all previous elements
$\hat\pi_j$, $j=1,\ldots,M$
plus the zero elements
$\hat\pi_j=0$, $j=M+1,\ldots,N$
provided
\be{newSet}
\Gamma-\wt p_j\wt\rho_j\ge0\,,\quad j=M+1,\ldots,N\,.
\ee
The optimal success probability for the new set is
$\wt P_{corr}=P_{corr}$.
\end{prop}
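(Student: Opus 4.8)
The plan is to verify the proposed detection operators against the necessary and sufficient optimality conditions \rf{cond1}--\rf{cond2}, exploiting the circumstance that these conditions are \emph{both} necessary and sufficient. First I would extract information from the assumed optimality of the $M$-state solution. Since the operators $\hat\pi_j$, $j=1,\ldots,M$, maximize $P_{corr}$ for the reduced problem, the \emph{necessity} of \rf{cond1}--\rf{cond2} guarantees that the associated Lagrange operator $\Gamma=\sum_{j=1}^M p_j\hat\pi_j\rho_j$ is Hermitian and that $G_j=\Gamma-p_j\rho_j\ge0$ for every $j=1,\ldots,M$.

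Next I would assemble the candidate POVM for the enlarged problem, keeping $\hat\pi_j$, $j=1,\ldots,M$, and adjoining $\hat\pi_j=0$ for $j=M+1,\ldots,N$. This set satisfies \rf{pih} and \rf{I} at the new value of $N$: every element is positive semidefinite, and because the adjoined elements vanish, $\sum_{j=1}^N\hat\pi_j=\sum_{j=1}^M\hat\pi_j=I$. The key observation is that the Lagrange operator of the enlarged problem coincides with the old one,
\be{planGam}
\sum_{j=1}^N\wt p_j\hat\pi_j\wt\rho_j=\sum_{j=1}^M p_j\hat\pi_j\rho_j=\Gamma\,,
\ee
since the new summands carry the factor $\hat\pi_j=0$ and, for $j\le M$, $\wt p_j=p_j$, $\wt\rho_j=\rho_j$. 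In particular the enlarged Lagrange operator inherits the Hermiticity of $\Gamma$, so \rf{cond1} holds automatically.

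It then remains to check \rf{cond2} for the enlarged problem, i.e.\ $\wt G_j:=\Gamma-\wt p_j\wt\rho_j\ge0$ for all $j=1,\ldots,N$. For the old indices $j=1,\ldots,M$ one has $\wt G_j=\Gamma-p_j\rho_j=G_j\ge0$, as just established; for the new indices $j=M+1,\ldots,N$ the required inequality is precisely the hypothesis \rf{newSet}. Thus the complete list of necessary and sufficient conditions is satisfied, and by their \emph{sufficiency} the enlarged POVM is optimal. The stated equality of success probabilities then follows at once from \rf{Pcorr} and \rf{planGam}: $\wt P_{corr}=\tr\Gamma=P_{corr}$.

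I anticipate no deep obstacle; the only point requiring care is to confirm that \rf{cond1}--\rf{cond2} genuinely apply to the reduced $M$-state problem despite $\sum_{j=1}^M p_j<1$. This is legitimate because those conditions characterize the maximizer of the linear functional $\tr\Gamma$ over the convex set of complete POVMs, a characterization insensitive to the normalization of the probabilities; the subnormalization merely encodes that the discarded detectors never click. Everything else reduces to the remark that appending zero POVM elements leaves $\Gamma$, and hence all the inequalities attached to the original states, untouched.
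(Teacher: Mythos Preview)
Your argument is correct. The paper itself does not supply a formal proof of this proposition; it is stated immediately after the informal remark that, whenever some optimal detectors vanish, the cost matrix $\Gamma$ involves only the remaining $M$ states with $\sum_{j=1}^M p_j<1$, and is presented as the ``more constructive'' reformulation of that remark. Your verification via the Holevo--Yuen conditions \rf{cond1}--\rf{cond2} is precisely the argument the paper tacitly invokes: appending zero elements leaves $\Gamma$ unchanged, the old inequalities $G_j\ge0$ persist for $j\le M$, and the hypothesis \rf{newSet} supplies the missing ones. Your closing observation that the characterization of the maximizer is insensitive to the normalization $\sum p_j=1$ is the one point the paper glosses over, and you handle it correctly.
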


As the final comment of this section, we note that in what
follows, we will distinguish the projective
(i.e., von Neumann) measurement from the generalized one.
The projective measurement may be realized only if there
are no more than two detection operators in which case they
should satisfy the usual condition $\hat\pi_j^2=\hat\pi_j$.
For the generalized measurement one needs more than two
detection operators.

\subsection{Minimum error conditions in terms
of Bloch vectors\label{MinSubs}}

We find convenient to choose
Pauli matrices $\sg_i$ ($i=1,2,3$) and
 $\sg_0=I$  as a basis
 in the linear space of Hermitian $2\times2$
matrices
\ba{PailiM}\nn
\sg_j\sg_k=-\sg_k\sg_j=i\sg_l\,,\quad
\sg_j^2=\sg_0=I\,,
\\
\tr\sg_j=0\,,\quad \tr(\sg_j\sg_k)=2\d_{j,k}\,,
\quad j,k,l=1,2,3\,.
\ea
Thus, once $N$ states $\rho_j$ are given, the coefficients
$\b_{j,k}$ defining the states
\be{rho}
\rho_j=\sum_{k=0}^3\b_{j,k}\sg_k\,,\quad
j=1,\ldots,N
\ee
may be found as
$\b_{j,k}=\frac{1}{2}\tr(\rho_j\sg_k)\in\mathbb R$.
Normalization condition \rf{1}
imposes the following restrictions on $\b_{j,k}$:
\be{bt0}
\b_{j,0}=\frac12\,,\quad\sum_{k=1}^3\b_{j,k}^2=\frac14\,.
\ee
Therefore, any state is defined by a
point on a sphere with the
radius equal to $1/2$ or equivalently by a
$3$-dimensional real vector (Bloch vector)
$\vec{\b}_j=(\b_{j,1},\b_{j,2},\b_{j,3})$
of the length equal to $1/2$.

As it was argued in the previous section
 we may look for the detection operators in
the form
\be{pij}
\hat\pi_j=\o_j\pi_j\,,\quad
\pi_j^\dg=\pi_j=\pi_j^2>0\,,\quad \o_j\ge0\,.
\ee
We shall call $\o_j$ the {\em frequencies}.
Taking into account Proposition \ref{prop0},
 we may look for
$M$ operators $\pi_j$ optimizing the
measurement strategy for $M\le N$ states $\rho_j$,
$j=1,\ldots,M$,
given with the probabilities $p_j$,
$\sum_{j=1}^Mp_j\le1$.
The problem is solved if such a strategy is found and the
remaining states (if $M<N$) satisfy conditions
\rf{newSet}.
Note that the optimization problem has always a solution
(see \cite{Hol2,Yuen}).

Operators
$\pi_j$ will be defined in terms of unknown coefficients
$\g_{j,k}$,
\be{pi}
\pi_j=\sum_{k=0}^3\g_{j,k}\sg_k
\ee
which should satisfy the relations similar to Eq. \rf{bt0}
\be{gjk}
\g_{j,0}=\frac12\,,\quad\sum_{k=1}^3\g_{j,k}^2=\frac14\,,
\ee
and every $\rho_j$ is also defined by a $3$-dimensional
real vector
$\vec\g_j=(\g_{j,1},\g_{j,2},\g_{j,3})$ of the length equal to $1/2$.
Moreover, from Eq. \rf{I} one finds additional restrictions
\ba{om1}
\sum_{j=1}^M\o_j\g_{j,k}&=&0\,,
\quad k=1,2,3\,,
\\
\sum_{j=1}^M\o_j&=&2\,,\quad \o_j\ge0\,.
\label{om2}
\ea
According to Eq. \rf{Gam} the cost matrix
is also defined in terms of the vectors $\vec\b_j$
and $\vec\g_j$,
\ba{GS}
\Gamma&=&\sum_{k,k'=1}^3\Gamma_{k,k'}\sg_k\sg_{k'},\\
\Gamma_{k,k'}&=&\sum_{j=1}^M\o_jp_j\b_{j,k}\g_{j,k'}\,,
\label{Gkk}
\ea
with the vectors $\vec\g_j$ such that
$\sum_{j=1}^M\o_j\vec\g_j=0$.

Now we can prove the following statement.
\begin{prop}\label{prop1}
If $\o_j>0$, $j=1,\ldots,M$, then
conditions \rf{cond2} are satisfied for
$\hat\pi_j=\o_j\pi_j$, where $\pi_j$
are given by Eq. \rf{pi}
with
\be{gamma}
\g_{j,k}=\frac{\b_{j,k}p_j-B_k}{A-p_j}\,,
\quad
k=1,2,3,\ j=1,\ldots,M\,,
\ee
where
\be{Aprop}
A=\tr\Gamma>p_j\,,\quad\forall j
\ee
 and $B_k\in\mathbb R$ are arbitrary parameters.
\end{prop}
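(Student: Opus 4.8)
The plan is to display $G_j$ as a positive multiple of the projector complementary to $\pi_j$, after which condition \rf{cond2} becomes manifest. First I would pass everything to the Pauli basis. Writing the (Hermitian) Lagrange operator as $\Gamma=\frac{A}{2}\sg_0+\sum_{k=1}^3 B_k\sg_k$, so that $\tr\Gamma=A$ and the $B_k$ are exactly its Bloch components, and using \rf{rho}--\rf{bt0} in the form $p_j\rho_j=\frac{p_j}{2}\sg_0+p_j\sum_{k=1}^3\b_{j,k}\sg_k$, I obtain
\be{Gbloch}
G_j=\Gamma-p_j\rho_j=\frac{A-p_j}{2}\,\sg_0+\sum_{k=1}^3\bigl(B_k-p_j\b_{j,k}\bigr)\sg_k\,.
\ee

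The crux is the observation that Eq. \rf{gamma} is nothing but the rearrangement $B_k-p_j\b_{j,k}=-(A-p_j)\g_{j,k}$. Substituting it into \rf{Gbloch} and using $\g_{j,0}=\frac12$ from \rf{gjk}, the vector part collapses onto $\vec\g_j$ and I get the factorization
\be{Gfact}
G_j=(A-p_j)\Bigl(\frac12\sg_0-\sum_{k=1}^3\g_{j,k}\sg_k\Bigr)=(A-p_j)\,(I-\pi_j)\,,
\ee
because $\pi_j=\frac12\sg_0+\sum_{k=1}^3\g_{j,k}\sg_k$ by \rf{pi} with \rf{gjk}. In words, the coefficients \rf{gamma} are engineered so that the Bloch vector of $G_j$ is antiparallel to $\vec\g_j$ with precisely the length dictated by its scalar part, i.e.\ $G_j$ is proportional to the complementary projector $I-\pi_j$.

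Positivity is then immediate. By \rf{pij} (equivalently $|\vec\g_j|=\frac12$ from \rf{gjk}) the operator $\pi_j$ is a rank-one projector, so $0\le\pi_j\le I$ and $I-\pi_j\ge0$; and by hypothesis \rf{Aprop} one has $A-p_j>0$ for every $j$. Hence \rf{Gfact} exhibits $G_j$ as a strictly positive scalar times a positive semidefinite operator, so $G_j\ge0$ for all $j=1,\ldots,M$, which is exactly \rf{cond2}. Equivalently, one may verify the two scalar inequalities characterizing a nonnegative $2\times2$ Hermitian matrix: $\tr G_j=A-p_j>0$ and $\det G_j=(\frac{A-p_j}{2})^2-\sum_{k=1}^3(B_k-p_j\b_{j,k})^2=0$, the determinant vanishing because $|\vec\g_j|=\frac12$.

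I expect the only real work to be the bookkeeping that yields \rf{Gfact}: keeping the scalar halves $\b_{j,0}=\g_{j,0}=\frac12$ straight and fixing the sign when passing from $\pi_j$ to $I-\pi_j$. The one conceptual point needing care is that the parameters $B_k$ in \rf{gamma} really are the Bloch components of the self-consistent $\Gamma=\sum_j\o_j p_j\pi_j\rho_j$; this is what forces the vector part of $G_j$ to equal $-(A-p_j)\vec\g_j$ exactly, rather than merely bounding its length. Under the completeness relations \rf{om1}--\rf{om2} it is automatic: substituting $p_j(\vec\g_j+\vec\b_j)=A\vec\g_j+\vec B$ into $\vec\Gamma=\frac12\sum_j\o_j p_j(\vec\g_j+\vec\b_j)$ and using $\sum_j\o_j\vec\g_j=0$, $\sum_j\o_j=2$ gives $\vec\Gamma=\vec B$. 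Beyond this there is no analytic obstacle: once \rf{Gfact} is in hand the claim reduces to the positivity of a projector.
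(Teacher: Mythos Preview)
Your argument is correct and substantially more direct than the paper's. The paper works in the opposite direction: assuming $\o_j\ne0$ it invokes the necessary condition $G_j\pi_j=0$ of Eq.~\rf{NC}, expands it in the Pauli basis to a system of scalar equations \rf{set1}--\rf{set3}, reduces these to a single independent one, introduces the auxiliary variables $X_{j,k}=\Gamma_{k,0}+\Gamma_{0,k}-p_j(\b_{j,k}+\g_{j,k})$, establishes summation identities for them via \rf{om1}--\rf{om2}, and only then solves to obtain \rf{gamma}; the positivity $G_j\ge0$ is checked at the very end through $\tr G_j>0$. So the paper \emph{derives} \rf{gamma} as the form forced by optimality---which is what feeds the direct-problem algorithm of the next subsection---whereas you \emph{verify} it via the factorization $G_j=(A-p_j)(I-\pi_j)$. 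Your route makes the geometric content of \rf{gamma} transparent (it is exactly the statement that the Bloch vector of $G_j$ equals $-(A-p_j)\vec\g_j$), while the paper's route explains where the formula comes from and along the way establishes the converse, i.e.\ that \rf{gamma} is necessary and not merely sufficient. One small point you glide over: your Bloch expansion of $\Gamma$ presupposes $\Gamma=\Gamma^\dagger$ (condition~\rf{cond1}), which is not a hypothesis here but must be checked. It follows by the same device you use for the vector part: from \rf{gamma} one has $p_j\,\vec\g_j\times\vec\b_j=\vec\g_j\times\vec B$, so the anti-Hermitian piece of $\Gamma$ is proportional to $\sum_j\o_j\,\vec\g_j\times\vec B=\bigl(\sum_j\o_j\vec\g_j\bigr)\times\vec B=0$ by~\rf{om1}.
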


\begin{proof}
According to Eqs. \rf{GS} and \rf{PailiM}, one may write
\be{Ga}
\Gamma=\sum_{k=0}^3\Gamma_{k,k}\sg_0+
\sum_{k=1}^3(\Gamma_{k,0}+\Gamma_{0,k})\sg_k+
\sum_{k\ne k'=1}^3\Gamma_{k,k'}\sg_{k}\sg_{k'}\,,
\ee
where [see Eqs. \rf{Gkk}, \rf{bt0} and \rf{gjk}]
\ba{Gk0}
\Gamma_{k,0}&=&\frac12\sum_{j=1}^M\o_jp_j\b_{j,k}\,,\\
\Gamma_{0,k}&=&\frac12\sum_{j=1}^M\o_jp_j\g_{j,k}\,.\label{G0k}
\ea
As it was mentioned in Sec. \ref{MEC}, instead of
$\Gamma$, one can equivalently use
$\wt\Gamma=(\Gamma+\Gamma^\dg)/2$ which in our case is
given by the first two terms at the right-hand side of
Eq. \rf{Ga}.
Therefore, below, to simplify notations, we will mean
by $\Gamma$ the same expression \rf{Ga} where the last
term is absent.
This is justified by the property which we will show below.
Namely, we will show that from Eq. \rf{NC}
follows the Hermitian character of $\Gamma$ and hence the
zero contribution to $\Gamma$ from the last term in Eq.
\rf{Ga}.
This leads to the following expression for $G_j$:
(\ref{cond2})
\be{GPI}
G_j=\left(\sum_{k=0}^3\Gamma_{k,k}-\frac12p_j\right)\sg_0+
\sum_{k=1}^3(\Gamma_{k,0}+\Gamma_{0,k}-p_j\b_{j,k})\sg_k\,.
\ee
Using Eqs. \rf{NC}, and \rf{GPI}
and assumed property $\o_j\ne0$, i.e., $G_j\pi_j=0$,
we obtain a set of equations for coefficients $\g_{j,k}$,
\be{set1}
\sum_{k=0}^3\Gamma_{k,k}-\frac12p_j+
2\sum_{k=1}^3(\Gamma_{k,0}+\Gamma_{0,k}-p_j\b_{j,k})\g_{j,k}=0\,,
\ee
\be{set2}
\Gamma_{k,0}+\Gamma_{0,k}-p_j(\b_{j,k}+\g_{j,k})+
2\g_{j,k}\sum_{k'=0}^3\Gamma_{k',k'}=0\,,
\ee
\be{set3}
(\Gamma_{k,0}+\Gamma_{0,k}-p_j\b_{j,k})\g_{j,k'}=
(\Gamma_{k',0}+\Gamma_{0,k'}-p_j\b_{j,k'})\g_{j,k}\,.
\ee
If we multiply Eq. \rf{set1} by $\g_{j,k}$ and subtract the
result from Eq. \rf{set2}, we get
\ba{z1}
(\Gamma_{k,0}+\Gamma_{0,k}-p_j\b_{j,k})\g_{j,k'} & \nn \\
=4\sum_{k''=0}^3(\Gamma_{k'',0}+
\Gamma_{0,k''}-p_j\b_{j,k''})&\!\!\! \g_{j,k''}\g_{j,k}\g_{j,k'}\,.
\ea
The right- (and hence the left-) hand side of Eq. \rf{z1}
is symmetric with
respect to the permutation of $k$ and $k'$ and therefore
 Eq. \rf{set3} is an implication of Eqs. \rf{set1} and
 \rf{set2}.
From the other hand, multiplying Eq. \rf{set2} by $\g_{j,k}$
and summing up over $k$ gives just Eq. \rf{set1}.
From Eq. \rf{set2} follows the symmetry property
$\Gamma_{k,k'}=\Gamma_{k',k}$ and, hence, the Hermitian
character
of the matrix $\Gamma$.
Indeed, multiplying Eq. \rf{set2} by $\o_j\g_{j,k'}$,
summing up over $j$, and using Eqs. \rf{Gkk} and \rf{om2}
yields
\be{z2}
\Gamma_{k,k'}=-\sum_{j=1}^Mp_j\o_j\g_{j,k}\g_{j,k'}=
\Gamma_{k',k}\,,\quad k,k'=1,2,3\,.
\ee

Thus from the set of Eqs. \rf{set1}-\rf{set3}, we have
to solve Eq. \rf{set2} only.
For this purpose, we denote
\be{X}
X_{j,k}:=\Gamma_{k,0}+\Gamma_{0,k}-p_j(\b_{j,k}+\g_{j,k})\,.
\ee
These quantities have two remarkable properties.
The first property
\be{OX}
\sum_{j=1}^M\o_jX_{j,k}=0\,,\quad k=1,2,3
\ee
 follows from Eqs. \rf{om1}, \rf{Gk0}, and \rf{G0k}
 and the second property
 \be{OGX}
\sum_{j=1}^M\sum_{k=1}^3\o_j\g_{j,k}X_{j,k}=
-\sum_{k=0}^3\Gamma_{k,k}=
-\frac12\tr\Gamma
 \ee
 is a consequence of \rf{gjk}, \rf{om1}, \rf{GS}, and \rf{Gkk}.
 Using Eqs. \rf{X} and \rf{OGX}, we rewrite Eq. \rf{set2} as
 \be{eqX}
X_{j,k}=2\g_{j,k}\sum_{j'=1}^M\sum_{k'=1}^3\o_{j'}\g_{j'k'}X_{j'k'}\,,
 \ee
 which has a solution
\be{XA}
 X_{j,k}=A\g_{j,k}\,,
 \ee
 where $A$ is for the moment an
 arbitrary constant.
 Putting thus found $X_{j,k}$ to Eq. \rf{OGX}
 and using Eqs. \rf{gjk} and \rf{om2},
 we relate this constant with $P_{corr}$,
 $A=\tr\Gamma=P_{corr}$.

 To finish the proof of the statement, we solve Eq. \rf{X}
 with respect to $\g_{j,k}$.
 For this purpose, we replace
 in Eq. \rf{X} $\Gamma_{k,0}+\Gamma_{0,k}$ by
 the sum of Eqs. \rf{Gk0} and \rf{G0k}
 which yields
\[
X_{j,k}=\frac12\sum_{j'=1}^M\o_{j'}p_{j'}(\b_{j',k}+\g_{j',k})-
p_j(\b_{j,k}+\g_{j,k})\,.
\]
Then, using Eqs. \rf{om1} and \rf{om2},
 we find a solution to the above equation
 \be{bg}
p_j(\b_{j,k}+\g_{j,k})=-X_{j,k}+B_k\,,\quad
j=1,\ldots,M\,,\ k=1,2,3\,.
 \ee
Here, parameters $B_k$ remain arbitrary.
From here and Eq. \rf{XA}, we get Eq. \rf{gamma}.

Finally, we note
once again that since all $\o_j$ are assumed to be
different from zero, every matrix $G_j$ has a zero
eigenvalue (cf. \cite{Hunter}).
Therefore, the condition $G_j\ge0$ reduces to
$\tr G_j>0$ which upon using Eq. \rf{GPI} yields
$A>p_j$.
\end{proof}

Note that the proposition \ref{prop1} leaves parameter $A$
and the frequencies $\o_j$ unspecified.
Nevertheless, if we assume the frequencies to be invariant
under a scaling transformation of the probabilities,
i.e., after the replacement $p_j\to\a p_j$, one has
$\o_j\to\o_j$, then from Eq. \rf{Gkk}, it follows that $A$
becomes a
homogeneous function of the first degree with respect to
the probabilities $p_j$ provided the parameters $\g_{j,k}$
are also invariant with respect to the same scaling.
The last property agrees with Eq. \rf{gamma} from which the
scale invariance of the parameters $\g_{j,k}$ follows if
$A$ is a homogeneous function of the first degree with
respect to $p_j$. Note also that under these assumptions,
the detection operators $\hat\pi_j$ become invariant
with respect to the same scaling transformation.

\subsection{Direct optimization problem}

By the direct optimization problem, we mean the problem of
finding the operators $\hat\pi_j$ minimizing $P_{err}$ for
a given set of operators $\rho_j$.
 Any set of $\hat\pi_j$
satisfying Eqs. \rf{pih}, \rf{I}, and \rf{cond2} with
$\Gamma$ given in Eq. \rf{Gam} is a solution to this problem.
Proposition \ref{prop1} opens a way for
 an algorithmic solution to the problem.

 First, we note that the operators $\hat\pi_j=\o_j\pi_j$ are
 defined by the frequencies $\o_j$ and Bloch vectors
 $\vec\g_j=(\g_{j,1},\g_{j,2},\g_{j,3})$.
 According to Proposition
 \ref{prop1}, the vectors $\vec\g_j$ are defined
 by the parameter $A$ and vector $\vec B=(B_1,B_2,B_3)$.
 The still undefined parameters $A$, $\o_j$,
 and $\vec B$
 should be found form Eqs. \rf{gjk}, \rf{om1}, and \rf{om2}.
 Using Eq. \rf{gamma}, we obtain from Eq. \rf{gjk}
 a set of equations for  $B_k$ and $A$,
 \be{EB}
\frac14A^2-\frac12Ap_j=\sum_{k=1}^3B_k^2-
2p_j\sum_{k=1}^3B_k\b_{j,k}\,,\ j=1,\ldots,M\,.
 \ee
 It is convenient to rewrite this system as $M-1$
 homogeneous equations for $A$ and $B_k$ and an equation of
 the second order with respect to $A$ and $B_k$. For this
 purpose, we subtract from
 the Eq. \rf{EB} at $j=1$
the same equation at $j=2,\ldots,M$,
thus obtaining
 \be{EBA}
4\sum_{k=1}^3B_k(p_1\b_{1,k}-p_j\b_{j,k})=A(p_1-p_j)\,,
\quad j=2,\ldots,M
 \ee
and sum up the Eqs. \rf{EB} over $j$ from $1$ to $M$
which yields
\be{EBA2}
MA^2-2A\a=
4M\sum_{k=1}^3B_k^2-8\sum_{j=1}^M\sum_{k=1}^3B_kp_j\b_{j,k}\,,
\ee
where $\a=\sum_{j=1}^Mp_j$.
We thus have $M$ equations for four unknown parameters
$A$ and $B_{1,2,3}$.
Therefore, it is natural to expect that the system may have
solutions for $M=2,3,4$ and has no solutions for $M>4$
except for some special cases.
This means that except for some particular cases, the
optimal measurement for $N\ge4$ pure qubit states
is realized with four operators $\hat\pi_j$ or less.
Thus, if for $M=4$
the main determinant of the system \rf{EBA} is different
from zero, we can find all $B_j$ as linear functions of $A$.
The parameter $A=P_{corr}$ should be found from the
second-order Eq. \rf{EBA2}.
After that, we find $\gamma_{j,k}$ from Eq. \rf{gamma} and have
to check whether the systems \rf{om1} and \rf{om2} are
compatible.
If this is the case, from this system one can
find the frequencies $\o_j$.
 If all $\o_j$ are nonnegative, $\o_j\ge0$, the problem is
solved.
Otherwise, one has to choose $M=3$.
We have to note here that although in principle the
solution of the linear system
of equations \rf{om1}, \rf{om2} may always
be found analytically (for instance, as a ratio of corresponding
determinants), its explicit expression may be rather
involved (see some simple examples below) and difficult for analysis.
In such cases, numerical solution to this system may be
helpful.

For $M=3$, if we add to two equations \rf{EBA} the
compatibility condition of the system \rf{om1} considered
as a homogeneous system with respect to $\o_j$, we obtain an
inhomogeneous linear system of equations for $B_{1,2,3}$.
This system defines $B_j$ as linear functions of $A$ and
similar to the previous case, Eq. \rf{EBA2}
becomes a second-order equation with respect to $A$.
After that, we can omit
one of the equations \rf{om1} and from two remaining equations
and Eq. \rf{om2} find $\o_j$.
If all $\o_j\ge0$, the problem is
solved.
Otherwise, we have to choose $M=2$.
In this case, the
measurement becomes projective (i.e., von Neumann
measurement).
Since an optimal measurement exists always
(see e.g., \cite{Yuen}) in this case among $N$ given states, there always
exist two states (say $\rho_{1}$ and $\rho_{2}$) such that if we assign them the
probabilities $\frac{1}{\alpha}p_{1,2}$, $p_1+p_2=\a$, they
are optimally discriminated with an orthogonal measurement
and all the other states satisfy the conditions formulated
in the proposition \ref{prop0}.
This property follows from the invariance of the detection
operators with respect to scaling transformation
$p_j\to\a p_j$ mentioned in Sec. \ref{MinSubs}.
Below
for $N=2$ (Sec. \ref{N2ex}) this invariance will be
 demonstrated explicitly.

\subsection{Inverse optimization problem}

By inverse optimization problem, we mean a problem of
finding all possible states $\rho_j$ and
the prior probabilities $p_j$
for which a given
measurement strategy is optimal.
For qubit states, this means that given the frequencies
$\o_j$ and parameters $\g_{j,k}$,
we have to find $p_j$ and
$\b_{j,k}$, $j=1,\ldots,N$, $k=1,2,3$.
We formulate the solution to this problem as the following
proposition.
\begin{prop}\label{prop2}
Let the Bloch vectors $\vec\g_j$,
$j=1,\ldots,N$ and numbers $\o_j\ge0$
such that $\sum_{j=1}^N\o_j\vec\g_j=0$
and $\sum_{j=1}^N\o_j=2$
defining a POVM
$\hat\pi_j=\o_j\pi_j$
be given.
Then this POVM corresponds to an optimal measurement strategy for
the states $\rho_j$ given
with the probabilities
\be{pj}
p_j=Aq_j
\ee
by the Bloch vectors
\be{bjk}
\vec\b_{j}=\frac{(1-q_j)\vec\g_{j}+\vec R}{q_j}\,,
\quad j=1,\ldots,N\,,
\ee
where
\be{AFq}
q_j=\frac{\frac14+(2\vec\g_{j}+\vec R)\cdot\vec R}%
{\frac12+2\vec\g_{j}\cdot\vec R}\,,
\ee
\be{AF}
A=1/F\,,\quad
F=\sum_{j=1}^Nq_j\,,
\ee
where $\vec R$ is an arbitrary real vector
and by dot we denote the scalar product of the vectors
from a three-dimensional Euclidian space.
The optimal success probability is
$P_{corr}=A$.
\end{prop}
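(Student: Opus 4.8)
The plan is to read Proposition~\ref{prop1} backwards. That proposition supplies the relation \rf{gamma} linking $\vec\g_j$ to $\vec\b_j$, $p_j$, $A$ and $\vec B$ which (for $\o_j>0$) is necessary and sufficient for the optimality condition \rf{cond2}, with $A=\tr\Gamma=P_{corr}>p_j$ and $\vec B$ arbitrary. So rather than solving \rf{gamma} for $\vec\g_j$ (the direct problem), I would solve it for $\vec\b_j$ and $p_j$, now regarding $\vec\g_j$ and $\o_j$ as the given data. Inverting \rf{gamma} yields $p_j\vec\b_j=(A-p_j)\vec\g_j+\vec B$, and the rescaling $q_j=p_j/A$, $\vec R=\vec B/A$ converts this directly into \rf{bjk}, with $\vec R$ inheriting the freedom of $\vec B$. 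Thus \rf{pj} and \rf{bjk} are simply \rf{gamma} rewritten.

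Two physical constraints then pin down $q_j$ and $A$. First, each reconstructed $\vec\b_j$ must be a genuine Bloch vector, $|\vec\b_j|^2=1/4$. On substituting \rf{bjk} and using $|\vec\g_j|^2=1/4$ the terms quadratic in $q_j$ cancel --- this is the decisive simplification --- and what remains is a single linear equation whose solution is precisely \rf{AFq}. Second, normalization of the probabilities, $\sum_j p_j=A\sum_j q_j=1$, fixes $A=1/F$ with $F=\sum_j q_j$, i.e. \rf{AF}. After this step every quantity in the construction is an explicit function of the free vector $\vec R$.

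I then have to verify that the construction actually meets the hypotheses of the optimality criterion. Completeness \rf{I} is immediate from the assumptions $\sum_j\o_j\vec\g_j=0$, $\sum_j\o_j=2$ and $\g_{j,0}=1/2$, while $\hat\pi_j\ge0$ follows from $\o_j\ge0$. The subtler consistency check is that the number $A$ used to invert \rf{gamma} really equals the trace of the cost matrix assembled from the reconstructed states; I would confirm $\tr\Gamma=A$ by inserting $p_j\vec\b_j=(A-p_j)\vec\g_j+\vec B$ into \rf{Gkk}, at which point the $\vec B$-dependent part is killed by $\sum_j\o_j\vec\g_j=0$ and the remainder collapses to $A$ through $\sum_j\o_j=2$. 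This simultaneously gives $P_{corr}=A$. Positivity $G_j\ge0$ for every $j$ then holds because relation \rf{gamma} is equivalent to $G_j\pi_j=0$; since $\pi_j$ is a rank-one projector this forces a zero eigenvalue of $G_j$, and $\tr G_j=A-p_j>0$ supplies the other, so $G_j\ge0$. Notably this argument is insensitive to whether $\o_j$ is positive or zero, so the indices with $\o_j=0$ need no separate treatment beyond the direct check just described (equivalently, Proposition~\ref{prop0}).

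The main obstacle is admissibility, i.e. guaranteeing $0<p_j$ and $A>p_j$, which in the variables above read $0<q_j<1$. From \rf{AFq} one sees that $q_j<1$ is equivalent to $|\vec R|<1/2$; and under that restriction $|\vec\g_j|=1/2$ makes the denominator $1/2+2\vec\g_j\cdot\vec R$ strictly positive, while the numerator is bounded below by $(|\vec R|-1/2)^2>0$, so that indeed $q_j>0$. Hence $\vec R$ is not literally arbitrary but must lie in the open ball $|\vec R|<1/2$, and I would state this admissibility range explicitly. Collecting the pieces --- completeness, Hermiticity of $\Gamma$, and $G_j\ge0$ for all $j$ --- the Holevo--Yuen conditions \rf{cond2} are satisfied, so the POVM is optimal with $P_{corr}=A$.
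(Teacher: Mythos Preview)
Your approach is essentially the same as the paper's: both invert the relation \rf{gamma} of Proposition~\ref{prop1} to express $\vec\b_j$ in terms of $\vec\g_j$, impose $|\vec\b_j|^2=1/4$ to obtain \rf{AFq}, substitute $\vec B=A\vec R$, and fix $A$ from $\sum_j p_j=1$. Your write-up is in fact more complete than the paper's, which stops after deriving \rf{bjk}--\rf{AF} and does not carry out the consistency check $\tr\Gamma=A$ or the explicit verification of $G_j\ge0$; your admissibility analysis correctly tightens the paper's phrase ``$\vec R$ is an arbitrary real vector'' to the open ball $|\vec R|<1/2$, a restriction the paper leaves implicit.
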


\begin{proof}
Some necessary conditions for an optimal measurement are
formulated in Proposition \ref{prop1}.
Thus, assuming the vectors $\vec\g_j$ be given in a frame by
their coordinates $\g_{j,k}$ and using Eq. \rf{gamma}, we find
the coordinates
$\b_{j,k}$
of the vectors $\vec\b_j$,
\be{betajk}
\b_{j,k}=\frac{(A-p_j)\g_{j,k}+B_k}{p_j}\,,
\ee
where $B_k$ are arbitrary real numbers which define a
vector $\vec B=(B_k)$.
For $\b_{j,k}$ to be coordinates of Bloch vectors, they
should satisfy Eq. \rf{bt0} which may be
solved with respect to $p_j$,
\be{pjA}
p_j=\frac{\frac14A^2+\sum_{k=1}^3B_k^2+2A\sum_{k=1}^3\g_{j,k}B_k}%
{\frac12A+2\sum_{k=1}^3\g_{j,k}B_k}\,.
\ee
If now we put
$\vec B=A\vec R$, i.e.,
$B_k=AR_k$, we obtain from here just Eqs.
\rf{pj} and \rf{AFq}.
At the same time, Eq. \rf{betajk}
reduces to Eq. \rf{bjk}.
Equation \rf{AF} follows from the condition
$\sum_{j=1}^Np_j=1$.
\end{proof}

\section{Particular cases}

\subsection{N=2\label{N2ex}}

Although the case of two qubit states is studied in details
\cite{Helstr,Barnett-Croke}, we find instructive to show how the known solution
follows from our approach.
Moreover, in view of
Proposition \ref{prop0}, we need to consider the case where
$p_1+p_2=\a<1$ which was not studied so far.

We find convenient to choose a frame in the three-dimensional
Euclidian space such that
the given states $\rho_1$ and $\rho_2$
are defined by the Bloch vectors
$\vec\b_1$ and $\vec\b_2$
 with coordinates
$\vec\b_1=(\b_{1,1},\b_{1,2},0)$ and
$\vec\b_2=(\b_{1,1},-\b_{1,2},0)$.

From Eq. \rf{om1} at $k=3$, we learn that $B_3=0$.
After solving the compatibility condition of two remaining
Eqs. \rf{om1} (with $k=1$ and $k=2$)
together with Eq. \rf{EBA}, we find $B_1$
and $B_2$ as linear functions of $A$,
\ba{B1}
B_{1}&=&\frac{\b_{1,1}}{4D^2}[A(p_1-p_2)^2+8\b_{1,2}^2p_1p_2(p_1+p_2)]\,,\\
B_{2}&=&\frac{\b_{1,2}(p_1-p_2)}{4D^2}[A(p_1+p_2)-8\b_{1,1}^2p_1p_2]\,,
\label{B2}
\ea
where
\be{D}
D=[\b_{1,1}^2(p_1-p_2)^2+\b_{1,2}^2(p_1+p_2)^2]^{1/2}\,.
\ee
Placing thus found $B_1$ and $B_2$ to Eq. \rf{EB}
at $j=1$, we solve
this equation with respect to $A$,
\be{A}
A=\frac12(p_1+p_2)+D\,.
\ee
Here from two roots of Eq. \rf{EB}, we kept the biggest one.
With the values $A$ and $B_{1,2}$ given in Eq. \rf{A} and
Eqs. \rf{B1} and \rf{B2},
one can transform the left-hand side of
Eq. \rf{om1} at $k=1$ to a form $(\o_2-\o_1)$ times a
nonzero factor so that one finds from the system \rf{om1}
and \rf{om2} that $\o_1=\o_2=1$.

Once the coordinates $B_{1,2,3}$ of the vector $\vec B$ and
the optimal success probability $P_{corr}=A$ \rf{A}
are found, we may calculate
detection operators
\be{pi1}
\pi_1=\left(
\begin{array}{cc}
\frac12 & \pi\\ \pi^{*} & \frac12
\end{array}
\right),\quad \pi_2=I-\pi_1\,,
\ee
\be{pi2ex}
\pi=\frac{1}{2D}[\b_{1,1}(p_1-p_2)-i\b_{1,2}(p_1+p_2)]
\ee
and the cost matrix
\be{GAM}
\Gamma=\frac14\left(
\begin{array}{cc}
2A & \gamma^*\\ \gamma & 2A
\end{array}
\right),
\ee
\ba{Gam12}
\gamma=\frac{1}{D}
\{i\b_{1,2}(p_1-p_2)(2D+p_1+p_2)
\\
+\b_{1,1}[(p_1-p_2)^2+2D(p_1+p_2)]\}\,.\nn
\ea
Here we observe the invariance of the detection operators
\rf{pi1} with respect to scaling
of the probabilities
$p_j\to\a p_j$
mentioned above.

If we notice that the parameters $\b_{1,1}$ and $\b_{1,2}$
may be expressed in terms of the Bloch vectors $\vec\b_1$
and $\vec\b_2$,
\be{beta12}
\b_{1,1}=\frac{(\vec\b_2+\vec\b_1)\cdot\vec\b_2}%
{\|\vec\b_2+\vec\b_1 \|}\,,\quad
\b_{1,2}=\frac{(\vec\b_2-\vec\b_1)\cdot\vec\b_2}%
{\|\vec\b_2-\vec\b_1 \|}\,,
\ee
where by
$\|\vec\b\|$ the length of the vector $\vec\b$ is denoted,
we may present parameter $D$ and with it the optimal
success probability $A$ \rf{A} in an explicitly covariant
 form
\be{Dcov}
D=\left[\frac{p_1^2+p_2^2}{4}-2p_1p_2\vec\b_1\cdot\vec\b_2\right]^{1/2}.
\ee
Another covariant form for $A$ corresponds to using the absolute value of
the overlap between the state vectors
$\la\psi_1|\psi_2\ra=s_{1,2}$,
$\rho_1=|\psi_1\ra\la\psi_1|$, $\rho_2=|\psi_2\ra\la\psi_2|$,
\be{s12}
|s_{1,2}|^2=\tr(\rho_1\rho_2)=
\frac12+2\vec\b_1\cdot\vec\b_2
\ee
which yields
\be{As12}
A=\frac{p_1+p_2}{2}+\frac12\left[(p_1+p_2)^2-4p_1p_2|s_{1,2}|^2\right]^{1/2}\,.
\ee
For $p_1+p_2=1$, the last equation gives
 the value known as the Helstrom bound
\cite{Helstr} (see also \cite{Barnett-Croke}).

Once the problem for two states given with the
probabilities $p_1$ and $p_2$, $p_1+p_2\le1$, is solved,
 we
can apply Proposition \ref{prop0} for formulating
conditions for $N>2$ qubit states which are optimally
discriminated by the same orthogonal measurement which
optimally distinguishes the given states $\rho_1$ and
$\rho_2$.
For that,
assuming that the states $\rho_j$ are given with the
probabilities $p_j$, we have to calculate
both the trace and the determinant of
the matrix $G_j=\Gamma-p_j\rho_j$.
After some algebra, one finds
\be{trG3}
\tr G_j=A-p_j\,,
\ee
\ba{detG3}
\mbox{det}\,G_j&=&\frac{D}{2}(A-p_j)-\frac{A(p_1-p_2)^2}{8D}
\nn\\
&+&
\frac{p_j(p_1-p_2)}{2D}
\left[ (p_1\vec\b_1-p_2\vec\b_2)\cdot\vec\b_j\right]
\nn\\
&+&{p_j}\left[p_1(\vec\b_1-\vec\b_j)+
p_2(\vec\b_2-\vec\b_j)\right]\cdot\vec\b_j\,.
\ea
If both $\tr G_j>0$ and $\mbox{det}\,G_j\ge0$,
$j=3,\ldots,N$,
then the optimal measurement is the projective measurement
which optimally distinguishes $\rho_1$ from $\rho_2$
given with the probabilities $p_1$ and $p_2$, $p_1+p_2<1$.

For $N$ equiprobable states,
given with the probability $p=1/N$,
we have $p_j=p$ and
$A=p+D$ so that  Eq. \rf{detG3} reduces to
\be{detG3Eq}
\mbox{det}\,G_j=p^2(\vec\b_1+\vec\b_2)\cdot(\vec\b_j-\vec\b_1)\,.
\ee

\subsection{$N=3$}

For three given states
$\rho_1$, $\rho_2$, and $\rho_3$,
it may happen that the optimal
measurement is the projective measurement.
In such a case, because of the aforementioned scale
invariance of the detection operators for $N=2$,
there are three possibilities
for the optimal discrimination among the states
$\rho_1$, $\rho_2$, and $\rho_3$.
The optimal measurement may be
 the projective measurement which optimally
discriminates among any two from three given states.
Therefore, to distinguish these possibilities below,
 we will need to put additional labels to quantities
$A$ and $G_j$.
In cases where there is no confusion, we will use the
previous notations also.
Thus, $A$ defined by Eqs. (\ref{A}) and \rf{Dcov} will be
denoted as $A_{1,2}$ so that
\be{A12}
A_{1,2}:=\frac12(p_1+p_2)+
\left[\frac{p_1^2+p_2^2}{4}-2p_1p_2\vec\b_1\cdot\vec\b_2\right]^{1/2}
\ee
and $\mbox{det}\,G_{j}$ given in Eq. \rf{detG3} at $j=3$
will be denoted $\mbox{det}\,G_{1,2;3}$, i.e.,
$\mbox{det}\,G_{1,2;3}:=\mbox{det}\,G_3$.
For the set of labels $\{1,2,3\}$, we will use also cyclic
permutations $\{1,2,3\}\to\{2,3,1\}\to\{3,1,2\}$.
Using these notations,
we may apply the proposition \ref{prop0} to formulate
conditions distinguishing projective optimal measurements from
generalized ones for discriminating among three qubit states.
\begin{prop}\label{prop3}
Let the different states $\rho_1$, $\rho_2$, and $\rho_3$ be given
with the probabilities $p_1$, $p_2$, $p_3$
and $\vec\b_1$, $\vec\b_2$,
$\vec\b_3$ as their Bloch vectors.
The optimal measurement discriminating among these states
is the projective measurement if the inequalities
$A_{i,j}>p_k$ and $G_{i,j;k}\ge0$ hold for one
cyclic permutation of the labels $\{i,j,k\}=\{1,2,3\}$
 at least.
Otherwise, i.e., if no cyclic permutation exists such that
both above inequalities hold, the optimal measurement is
generalized.
In the first case,
 the optimal success probability is $P_{corr}=A_{i,j}$.
 In the second case,
 $P_{corr}=A$ where $A$,
should be found from the system
of equations \rf{EB}, \rf{om1}, and
\rf{om2}.
\end{prop}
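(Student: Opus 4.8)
The plan is to prove the stated equivalence through the dichotomy that, for three states, an optimal POVM has at most three nonzero elements and is projective exactly when one of them vanishes. I would establish the two implications separately: that each cyclic permutation satisfying the two inequalities yields a projective optimum, and conversely that a projective optimum forces one such permutation; the contrapositive of the latter then gives the ``otherwise'' (generalized) clause.

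For the first implication I would start from the explicit two-state solution of Sec.~\ref{N2ex} applied to the pair $\rho_i,\rho_j$ carrying the original probabilities $p_i,p_j$. Since $p_i+p_j=1-p_k<1$, this lies in the regime $\alpha<1$ treated there, and the solution is projective with $\omega_i=\omega_j=1$ and success probability $A_{i,j}$. To promote it to a three-state measurement with $\hat\pi_k=0$ I would invoke Proposition~\ref{prop0}, whose sole hypothesis \rf{newSet} is $G_k=\Gamma-p_k\rho_k\ge0$. Because $G_k$ is $2\times2$ Hermitian, this positivity is equivalent to $\tr G_k\ge0$ together with $\det G_k\ge0$; by \rf{trG3} and \rf{detG3} these are precisely $A_{i,j}\ge p_k$ and $G_{i,j;k}\ge0$, so the assumed inequalities deliver \rf{newSet}. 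Proposition~\ref{prop0} then certifies optimality of the projective measurement and gives $P_{corr}=A_{i,j}$.

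For the converse, suppose some optimal measurement is projective; then one detector, say $\hat\pi_k$, vanishes, so $\hat\pi_i+\hat\pi_j=I$ and the attained value equals the two-state success probability of $\{\hat\pi_i,\hat\pi_j\}$ on $\rho_i,\rho_j$. A short extremality comparison shows this equals $A_{i,j}$: the value is feasible for two states, hence at most the two-state optimum $A_{i,j}$, while padding the optimal two-state measurement by $\hat\pi_k=0$ shows $A_{i,j}$ is attainable in three states. Condition \rf{cond2} at index $k$ is exactly $G_k\ge0$, which by the same $2\times2$ criterion returns $A_{i,j}\ge p_k$ and $G_{i,j;k}\ge0$, i.e. the inequalities of the proposition. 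In the remaining (generalized) case all frequencies are positive, so Proposition~\ref{prop1} applies with $M=3$ and fixes $P_{corr}=A=\tr\Gamma$ through the normalizations \rf{gjk} and the completeness relations \rf{om1}, \rf{om2}, that is, through the system \rf{EB}, \rf{om1}, \rf{om2}.

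The main obstacle I anticipate is the boundary $A_{i,j}=p_k$, which separates the strict inequality demanded in the statement from the non-strict $\tr G_k\ge0$ that the $2\times2$ criterion naturally yields in the converse. I would resolve it by noting that $\tr G_k=0$ with $\det G_k\ge0$ forces $G_k=0$; then \rf{NC} imposes no constraint on $\hat\pi_k$, a generalized optimum coexists with the projective one, and the configuration legitimately belongs to the ``otherwise'' clause. A secondary point to check is exhaustiveness of the dichotomy---that a non-projective three-state optimum must use three nonzero, rank-one elements---which follows from the projector property established after \rf{NC} and from the cited existence of an optimal measurement.
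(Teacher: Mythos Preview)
Your proposal is correct and follows essentially the same approach as the paper: both rest on the dichotomy that an optimal POVM for three distinct states has either two or three nonzero elements, with Proposition~\ref{prop0} (via \rf{trG3}, \rf{detG3}) characterizing exactly when the two-element (projective) option succeeds. The paper's own proof is a two-sentence sketch of this dichotomy, whereas you have spelled out both implications and the boundary case $A_{i,j}=p_k$ in detail; the underlying argument is the same.
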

\begin{proof}
First we note that since the states are assumed to be
different from each other,
the optimal POVM contains more than one element.
The statement becomes evident if one notices that
the optimal solution exists always and
the optimal POVM contains three elements if it
cannot contain two elements.
\end{proof}

Note nevertheless that for the general case
of three qubit states,
solution to the
system of equations
\rf{EB}, \rf{om1}, and \rf{om2} is rather involved.
Therefore, below we will consider some particular cases.

\subsubsection{Equiprobable states}

Let us consider $p_1=p_2=p_3=p=1/N$ ($N=3$).
In this case,
$A=p\,\left(1+\sqrt{\frac12-2\vec\b_1\cdot\vec\b_2}\right)>p$
and whether the optimal measurement is projective or
generalized is defined by the sign of $\mbox{det}\,G_{1,2;3}$.
Thus, using Eq. \rf{detG3Eq} at $j=3$,
 we
may formulate a modification of Proposition \ref{prop3}
for this particular case.
\begin{prop}\label{prop3eq}
For three different equiprobable states
$\rho_j$
given by their Bloch vectors $\vec\b_j$, $j=1,2,3$,
the optimal POVM
includes three elements if
the following inequalities,
\ba{cond3a}
(\vec\b_1+\vec\b_2)\cdot(\vec\b_3-\vec\b_1)<0\,,
 \\ \label{cond3b}
(\vec\b_3+\vec\b_1)\cdot(\vec\b_2-\vec\b_3)<0\,,
\\
(\vec\b_2+\vec\b_3)\cdot(\vec\b_1-\vec\b_2)<0\,\,\,
 \label{cond3c}
\ea
hold.
Otherwise,
i.e., if at least one of these inequalities is violated
the optimal measurement is projective.
\end{prop}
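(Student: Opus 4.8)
The plan is to specialize the already-established Proposition~\ref{prop3} to the equiprobable case, showing that when $p_1=p_2=p_3=p$ the first hypothesis $A_{i,j}>p_k$ holds automatically and the second hypothesis $G_{i,j;k}\ge0$ collapses to a single scalar inequality made explicit by Eq.~\rf{detG3Eq}. First I would record that, since the states are distinct, the Bloch vectors $\vec\b_i$ and $\vec\b_j$ have equal length $\frac12$ but are unequal, so by Cauchy--Schwarz $\vec\b_i\cdot\vec\b_j<\frac14$ and hence $\frac12-2\vec\b_i\cdot\vec\b_j>0$. Substituting $p_i=p_j=p$ into Eq.~\rf{A12} then gives $A_{i,j}=p\bigl(1+\sqrt{\frac12-2\vec\b_i\cdot\vec\b_j}\bigr)>p=p_k$, so the inequality $A_{i,j}>p_k$ of Proposition~\ref{prop3} is satisfied for every cyclic permutation with no extra assumption.

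Next I would reduce the operator inequality $G_{i,j;k}\ge0$ to a determinant condition. Because $G_{i,j;k}$ is a Hermitian $2\times2$ matrix, it is positive semidefinite exactly when both its trace and its determinant are nonnegative. By Eq.~\rf{trG3}, $\tr G_{i,j;k}=A_{i,j}-p_k=A_{i,j}-p>0$ by the previous step, so $G_{i,j;k}\ge0$ is equivalent to $\det G_{i,j;k}\ge0$. Using the equiprobable specialization~\rf{detG3Eq} for the distinguished pair $\{1,2\}$ and its two cyclic relabellings I would write
\[
\det G_{1,2;3}=p^2(\vec\b_1+\vec\b_2)\cdot(\vec\b_3-\vec\b_1),
\]
together with the analogous expressions obtained under $\{1,2,3\}\to\{2,3,1\}\to\{3,1,2\}$. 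Since $p^2>0$, the sign of each determinant equals the sign of one of the scalar products appearing in \rf{cond3a}--\rf{cond3c}.

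Finally I would assemble the logic. By Proposition~\ref{prop3} the optimal measurement is projective iff at least one cyclic permutation satisfies both hypotheses; by the two reductions above this occurs iff at least one of the three products $(\vec\b_1+\vec\b_2)\cdot(\vec\b_3-\vec\b_1)$, $(\vec\b_3+\vec\b_1)\cdot(\vec\b_2-\vec\b_3)$, $(\vec\b_2+\vec\b_3)\cdot(\vec\b_1-\vec\b_2)$ is nonnegative, i.e.\ iff at least one of the inequalities \rf{cond3a}--\rf{cond3c} is violated. Passing to the contrapositive shows that all three strict inequalities hold precisely when no cyclic permutation admits a two-element optimum, in which case---the states being distinct and an optimal POVM always existing with at most three elements---the optimal POVM must contain three elements, which is the assertion.

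The step I expect to require the most care is the bookkeeping of the cyclic permutations: Eq.~\rf{detG3Eq} is derived for the fixed pair $\{1,2\}$ with the third state labelled $3$, and I must check that the relabelling used in Proposition~\ref{prop3} reproduces exactly the three products in \rf{cond3a}--\rf{cond3c} with the correct signs, so that ``some permutation is projective'' negates cleanly into ``all three inequalities hold.'' Everything else is a direct substitution, since the positivity of $A_{i,j}-p$ removes the trace condition and leaves only the determinant, already given in Bloch form by the equiprobable reduction.
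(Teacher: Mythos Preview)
Your proposal is correct and follows essentially the same approach as the paper: the paper presents Proposition~\ref{prop3eq} as an immediate specialization of Proposition~\ref{prop3}, noting in the preceding paragraph that for equiprobable states $A_{i,j}=p\bigl(1+\sqrt{\tfrac12-2\vec\b_i\cdot\vec\b_j}\bigr)>p$ automatically and that the remaining condition $G_{i,j;k}\ge0$ reduces, via Eq.~\rf{detG3Eq}, to the sign of $(\vec\b_i+\vec\b_j)\cdot(\vec\b_k-\vec\b_i)$. Your write-up makes the trace/determinant reduction and the cyclic bookkeeping explicit, but the underlying argument is identical.
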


If for instance $(\vec\b_1+\vec\b_2)\cdot(\vec\b_3-\vec\b_1)\ge0$
then the optimal measurement is the one which optimally
distinguishes the state $\rho_1$ from $\rho_2$.
In the coordinate system where
\ba{3csa}
\vec\b_1&=&(\b_{1,1},\b_{1,2},\b_{1,3})\,,\\
\label{3csb}
\vec\b_2&=&(\b_{1,1},-\b_{1,2},\b_{1,3})\,,\\
\label{3csc}
\vec\b_3&=&(\b_{3,1},\b_{3,2},\b_{1,3})\,,
\ea
\be{zz1}
\b_{1,1}\ge0\,,\quad \b_{1,2}>0\,,\quad \b_{1,3}\ge0
\ee
the optimal success probability reads
\be{A32eqp}
P_{corr}=A_{1,2}=\frac13(1+2\b_{1,2})\,.
\ee

For the generalized measurement,
 the system of Eqs. \rf{EB}, \rf{om1}, and \rf{om2}
 in the frame \rf{3csa}-\rf{3csc}
acquires a simple form and one easily obtains
that the vector $\vec B$ points to the positive direction
of the $z$ axis, $\vec B=(0,0,\frac13\b_{1,3})$,
and the optimal success probability depends on the common
projection of the vectors onto the same axis
\be{A3eqp}
P_{corr}=A=p\left(1+\sqrt{1-4\b_{1,3}^2}\right),\quad
p=\frac13\,.
\ee
For the frequencies $\o_j$,
 one gets
\be{o123peq}
\o_{1,2}=\frac{\mp\b_{1,1}\b_{3,2}-\b_{1,2}\b_{3,1}}%
{\b_{1,2}(\b_{1,1}-\b_{3,1})}\,,\quad
\o_{3}=\frac{2\b_{1,1}}{\b_{1,1}-\b_{3,1}}\,.
\ee
From the proposition \ref{prop3eq},
 it follows that
inequalities \rf{cond3a}-\rf{cond3c}
imply positivity of the frequencies \rf{o123peq}.

From Eq. \rf{gamma}, we find the vectors $\vec\g_j$,
$j=1,2,3$,
 defining
the detection operators $\hat\pi_j$,
$\vec\g_j=(1-4\b_{1,3}^2)^{-1/2}(\b_{j,1},\b_{j,2},0)$.
We thus see that these vectors
lay in the $z=0$ plane and differ from the projection of
the vectors $\vec\b_j$ onto the same plane by the factor
$(1-4\b_{1,3}^2)^{-1/2}$.
In particular, if the vectors $\vec\b_1$, $\vec\b_2$, and
$\vec\b_3$ are coplanar, i.e., if $\b_{1,3}=0$, the measurement
is generalized provided these vectors themselves may form a
POVM.

It is also useful to represent the success probability
\rf{A3eqp} in a covariant form
\be{A3cov}
P_{corr}=p\left(1+\sqrt{1-4\vec\b\cdot\vec\b}\right),\quad
p=\frac13\,,
\ee
\be{vecb}
\vec\b=\frac{[\vec\b_1\times\vec\b_2+\vec\b_2\times\vec\b_3+
\vec\b_3\times\vec\b_1]
(\vec\b_1,\vec\b_2,\vec\b_3)}%
{\|\b_1\times\b_2+\b_2\times\b_3+\b_3\times\b_1\|^2}
\ee
where by oblique cross, we denote
the vector product and by brackets
the mixed product of three-dimensional vectors.

Once the operators $\pi_j$ and frequencies
$\o_j$ are determined we may calculate the cost matrix
$\Gamma$. Thus from Eq. \rf{Gam}, it follows that
this matrix is diagonal
$\Gamma=\mbox{diag}(A/2+\b_{1,3},A/2-\b_{1,3})$.

Now,
assuming that the states $\rho_1$, $\rho_2$, and $\rho_3$
are discriminated by a generalized measurement and
using Proposition \ref{prop0}, we can formulate
conditions when the
optimal measurement for discriminating among
 $N\ge4$ equiprobable states given by their
Bloch vectors $\vec\b_j=(\b_{j,1},\b_{j,2},\b_{j,3})$,
$j=1,\ldots,N$,
is the same which is optimal for the first three states.
This is the case if both the trace and determinant of the
matrices $G_j=\Gamma-p\rho_j$, $j=4,\ldots,N$,
 are non-negative.
 Since the trace is always positive,
$\tr G_j=p\sqrt{1-4\b_{1,3}}>0$,
it remains to analyze the sign of the determinant
$\mbox{det}\,G_j=\b_{1,3}(\b_{j,3}-\b_{1,3})$.
Since $\b_{1,3}$ is assumed to be positive,
$\mbox{det}\,G_j>0$ if $\b_{j,3}-\b_{1,3}>0$.
From here, it follows that if all the additional vectors
in the chosen coordinate system
share the same hemisphere of the Bloch sphere and their
latitude is grater than or equal to the latitude of the three
given states,
 then they are discriminated by the same
measurement as the three given states and the optimal
success probability is given by Eq. \rf{A3eqp}
[or Eq. \rf{A3cov}] with $p=1/N$.
If the vectors are given in another coordinate system,
 the above condition takes the form
\be{detGcov}
\vec\b\cdot(\vec\b_j-\vec\b)\ge0\,,
\quad j=4,\ldots,N
\ee
with $\vec\b$ given in Eq. \rf{vecb}.

The optimization problem for three equiprobable states
plays a fundamental role since
for $N\ge4$ equiprobable states,
 the following statement takes place:
\begin{prop}\label{Nequalp}
Let $N\ge4$ different equiprobable states
$\rho_j$ be
given by their Bloch
vectors $\vec\b_j$, $j=1,\ldots,N$.
Then,
\newline
(i) if
no vector $\vec B\ne0$ exists such
that
\be{ntrB}
\vec B\cdot\vec\b_1=\vec B\cdot\vec\b_j\,,
\quad
j=2,\ldots,N
\ee
 and the vectors $\vec\b_j$, $j=1,\ldots,N$
cannot form a POVM,
i.e.,
no numbers $\o_j\ge0$, $j=1,\ldots,N$ such that
$\sum_{j=1}^N\o_j=2$ and $\sum_{j=1}^N\o_j\vec\b_j=0$ exist,
 then no solution to the optimization problem,
containing $N$ POVM elements, exists;
\newline
(ii)
if under the same assumption for the vector $\vec B$
 the vectors $\vec\b_j$ can form a POVM,
  then
$\vec\g_j=\vec\b_j$, $j=1,\ldots,N$, and
$P_{corr}=A=2p=2/N$;
\newline
(iii) if a vector $\vec B\ne0$
satisfying the conditions \rf{ntrB}
exists then the optimal
measurement may always be realized with either
two or three POVM elements.
\end{prop}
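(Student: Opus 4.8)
The plan is to read condition \rf{ntrB} geometrically: a nonzero $\vec B$ with $\vec B\cdot\vec\b_j$ independent of $j$ exists if and only if the differences $\vec\b_j-\vec\b_1$ fail to span $\mathbb R^3$, i.e. if and only if all Bloch vectors are coplanar. Since $N\ge4$ distinct points on the Bloch sphere cannot be collinear, the two hypotheses of the proposition correspond exactly to the noncoplanar case, giving (i) and (ii), and the genuinely planar case, giving (iii). The whole argument rests on specializing the consistency relation \rf{EB} to equiprobable states, where it reads $\frac14A^2-\frac12Ap=|\vec B|^2-2p\,\vec B\cdot\vec\b_j$; as the left side is $j$-independent, any $N$-element solution (all $\o_j>0$, so that Proposition \ref{prop1} and \rf{gamma} apply) forces $\vec B\cdot\vec\b_j$ to be constant, i.e. forces \rf{ntrB}.

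For (i) and (ii) I would use that, by assumption, the only $\vec B$ obeying \rf{ntrB} is $\vec B=0$. Feeding $\vec B=0$ into \rf{gamma} gives $\vec\g_j=\frac{p}{A-p}\vec\b_j$; comparing lengths ($|\vec\g_j|=|\vec\b_j|=\frac12$, with $A>p>0$) fixes $\frac{p}{A-p}=1$, hence $A=2p=2/N$ and $\vec\g_j=\vec\b_j$. The frequency constraints \rf{om1}--\rf{om2} then become $\sum_j\o_j\vec\b_j=0$, $\sum_j\o_j=2$, which is precisely the statement that the $\vec\b_j$ form a POVM. If they cannot (part (i)) no admissible nonnegative frequencies exist and no $N$-element optimal solution can be built; if they can (part (ii)) these frequencies furnish, through Proposition \ref{prop1}, an optimal measurement with $\vec\g_j=\vec\b_j$ and $P_{corr}=A=2/N$.

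For (iii) I would work in a frame whose $z$-axis is the plane's normal, so $\b_{j,3}=b$ is common and the in-plane parts $\vec v_j=(\b_{j,1},\b_{j,2})$ lie on a circle of radius $r=\sqrt{1/4-b^2}$. Suppose an optimal POVM had $M\ge3$ active elements. The $k=3$ component of \rf{om1} gives $2g_3=0$ for the common value $g_3=\g_{j,3}$, forcing $\g_{j,3}=0$ and $B_3=pb$; imposing $|\vec\g_j|=\frac12$ on the in-plane image of \rf{gamma} then forces the in-plane part $\vec B^{(2)}$ to vanish, since otherwise $\vec B^{(2)}\cdot\vec v_j=\text{const}$ would confine $\ge3$ distinct points to a chord. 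Hence $\vec B=(0,0,pb)$, $\vec\g_j=\frac{1}{2r}\vec v_j$, and $A=p(1+2r)=p\,(1+\sqrt{1-4b^2})$ as in \rf{A3eqp}. Crucially, the cost matrix then collapses to $\Gamma=\frac{A}{2}\sg_0+pb\,\sg_3=\diag(\frac{A}{2}+pb,\frac{A}{2}-pb)$, independent of the weights $\o_j$ and of which states are active. The remaining freedom is a purely two-dimensional balancing problem: find $\o_j\ge0$ with $\sum_j\o_j=2$ and $\sum_j\o_j\vec v_j=0$, solvable exactly when the origin lies in the convex hull of $\{\vec v_j\}$. By Carath\'{e}odory in the plane, whenever it is solvable it is solvable with at most three nonzero $\o_j$; since $\Gamma$ is unchanged, the inequalities $G_j=\Gamma-p\rho_j\ge0$ that held at the original optimum, together with \rf{NC}, persist, so the three-element configuration is still optimal. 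If instead the origin is not in that convex hull, no $M\ge3$ configuration exists, and because an optimal measurement always exists it must be the two-element projective one; thus the optimum is realized with two or three POVM elements.

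The step I expect to be most delicate is the weight-independence of $\Gamma$ in the coplanar case, because it is exactly what legitimizes the Carath\'{e}odory reduction: I must check that the $\sg_1,\sg_2$ coefficients of $\Gamma$ vanish by \rf{om1}, and that $A$ and the $\sg_3$ coefficient depend only on the common height $b$, so that trimming the support of $\{\o_j\}$ leaves $\Gamma$, and hence every inequality $G_j\ge0$ for the passive states, intact. Establishing this carefully (equivalently, invoking uniqueness of the optimal Lagrange operator) is what upgrades ``there is a $\ge3$-element optimum'' to ``there is a three-element optimum,'' and together with the existence theorem supplying the projective alternative completes (iii).
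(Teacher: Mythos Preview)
Your treatment of (i) and (ii) matches the paper's: both argue that equiprobability forces $\vec B\cdot\vec\b_j$ to be $j$-independent through \rf{EB}, so under the hypothesis only $\vec B=0$ survives, whence \rf{gamma} yields $\vec\g_j=\vec\b_j$, $A=2p$, and the frequency constraints become exactly the POVM condition on the $\vec\b_j$.

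For (iii) you take a genuinely different route. The paper proceeds by induction on $N$: for $N=4$ it expresses $\o_1,\o_2,\o_3$ as linear (hence monotone) functions of $\o_4$ and argues that either at $\o_4=0$ all three are nonnegative, or else sliding $\o_4$ produces a boundary point where one of them vanishes; the inductive step repeats this with $\o_4,\ldots,\o_{N+1}$ as free parameters and invokes the scaling invariance of \rf{cond2} under $p\to p'$ to close the loop. Your argument instead isolates the structural reason the reduction works: once one sees that in the coplanar frame $\Gamma=\frac{A}{2}\sg_0+pb\,\sg_3$ is determined by $A,p,b$ alone, the frequency constraints decouple into a pure convex-combination problem in the plane, and Carath\'eodory's theorem in $\mathbb R^2$ gives the bound $3$ in one stroke. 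Because $\Gamma$ is unchanged, all inequalities $G_j\ge0$ (for active and inactive states alike) persist, so the reduced POVM is still optimal. This is cleaner and more conceptual than the paper's induction, and it makes explicit why the answer is ``three'': it is the Carath\'eodory number of the plane. The paper's argument, by contrast, is self-contained (no external convexity lemma) and stays entirely within the linear-algebraic machinery of \rf{om1}--\rf{om2}; the price is the somewhat delicate monotonicity bookkeeping in the inductive step.

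Your identification of the crux is accurate: the vanishing of the $\sg_1,\sg_2$ coefficients of $\Gamma$ follows directly from \rf{om1} together with $\g_{j,k}\propto\b_{j,k}$ for $k=1,2$, and the $\sg_0,\sg_3$ coefficients depend only on $\sum_j\o_j=2$ and the common height $b$. With that verified, your proof of (iii) is complete; the existence theorem then handles the case where no $M\ge3$ optimum exists.
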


\begin{proof}
The statement (i) becomes evident in view of Eq. \rf{gamma}
which tells us that
 for $\vec B=0$,
  a solution is possible only if
$\vec\g_{j}=\vec\b_j$, in which case $\pi_j=\rho_j$.

The statement (ii) is a particular case of a solution
previously indicated by Yean et al. \cite{Yuen}.

Thus it remains to prove the statement (iii).
For that, we
will use the induction method and first prove the statement for
$N=4$.
If for $N=4$ there exists a vector $\vec B\ne0$
satisfying Eq. \rf{ntrB},
 then the system \rf{EBA} has a nontrivial solution
 and we easily find that
 in the coordinate system
 \rf{3csa}-\rf{3csc},
 the vector $\vec B$ points to the positive direction of
 the $z$ axis.
 Using Eqs. \rf{gamma}, \rf{om2} and \rf{om1}
 at $k=3$,
 we find its exact value, $\vec B=(0,0,p\,\b_{1,3})$,
 $p=1/4$.

 To find the frequencies $\o_j$, $j=1,\ldots,4$, we have
 to solve two from three
 equations given in Eq. \rf{om1} (at $k=1,2$)
 together with Eq.  \rf{om2}.
 From these equations one can always express say
 $\o_1$,  $\o_2$, and $\o_3$ in terms of $\o_4$ but
 only solutions with all $\o_k\ge0$, $k=1,\ldots,4$
 are suitable for our purpose.
 It is important to stress that
 $\o_{1,2,3}$ are linear (and hence monotonous) functions of
 $\o_4$.
 If we put $\o_4=0$,
  we find unique
 values for $\o_1$, $\o_2$, and $\o_3$
 since corresponding linear inhomogeneous system has a
 unique solution.
 If additionally all these
 frequencies are nonnegative, $\o_j\ge0$, $j=1,2,3$,
  then
 the optimal measurement contains three POVM elements
 $\pi_1$, $\pi_2$, and $\pi_3$ and the statement is
 proven.
 Let at $\o_4=0$ at least one of $\o_j$, $j=1,2,3$ is
 negative,
 say,
  for definiteness $\o_1<0$ and $\o_2\ge0$,
 $\o_3\ge0$.
 If the optimal measurement with four POVM elements exists,
then taking into account a monotonous dependence of
 $\o_{1,2,3}$ on $\o_4$,
  we conclude that
 there exists a value $\o_4=\o_{40}>0$ such
 that $\o_1=0$
 and $\o_2\ge0$, $\o_3\ge0$
 and the optimal
 measurement contains no more
 than three POVM elements
 $\pi_4$, $\pi_3$, and $\pi_2$
 which also tells us that the statement is proven.
 If the optimal
 measurement with four POVM elements does not exist,
  then
 it contains either three or two POVM elements
 since by assumptions that the given states are different
 from each other it cannot contain one element.
 Thus for $N=4$,
  the statement is proven.

Assume the statement be correct for $N$ Bloch vectors
 (the main assumption)
and
prove it for $N+1$ vectors.
For the general case
in the coordinate system \rf{3csa}-\rf{3csc}
from Eqs. \rf{EBA}, \rf{gamma}, \rf{om2}, and \rf{om1},
we
find the same value for the vector $\vec B$ as it was for
$N=4$ with the sole difference that now $p=1/(N+1)$.
To find $N+1$ frequencies $\o_j$,
 we have to solve
the same Eqs.
\rf{om1} at $k=1,2$ and Eq. \rf{om2}
from which we can always express three of them,
say, $\o_1$, $\o_2$, and $\o_3$ as linear
(and hence monotonous once again) functions
 of $\o_{j}$, $j=4,\ldots,N+1$.
Let us put $\o_j=0$, $j=4,\ldots,N+1$.
If we find $\o_{j}\ge0$, $j=1,2,3$,
 the statement is proven.
Therefore,
 assume that at least one of these frequencies is
negative, say, for definiteness $\o_1<0$ and $\o_2\ge0$,
$\o_3\ge0$.
The situation here is a bit more complicated than it was
for $N=4$ since the dependence of every $\o_j$, $j=1,2,3$ on the
other frequencies may be increasing with respect to some
frequencies and decreasing for some others.
If there exists a
measurement with $N+1$ POVM elements,
different scenarios may take place but in all cases,
 there
should exist a set of nonnegative frequencies.
In particular, it may happen that
there exists
such values of $\o_j=\o_{j0}\ge0$, $j=4,\ldots,N+1$ that
$\o_1=0$ and $\o_2\ge0$, $\o_3\ge0$.
Another possibility may be for instance such that at
$\o_1=0$,
 at least one of two other frequencies
becomes negative, say,
 $\o_2<0$, but with a variation of the
frequencies $\o_j$, $j=4,\ldots,N+1$
the frequency $\o_2$ should become
positive and hence should cross the point $\o_2=0$.
In other words, the set of the values which $\o_{1,2,3}$
take under the variation of $\o_j\in[0,2]$, $j=4,\ldots,N+1$
necessarily contains a point where at least one of $\o_1$,
$\o_2$, and $\o_3$ vanishes.
This means that the
optimal measurement should contain no more than $N$ POVM
elements.
For $\o_1=0$ and $\o_j\ge0$, $j=2$, \ldots, $N+1$,
 these are
$\pi_2$, \ldots, $\pi_{N+1}$.
Therefore, the matrix
$\Gamma=p\sum_{j=2}^{N+1}\o_j\pi_j\rho_j$
satisfy conditions \rf{cond1} and \rf{cond2} with
$p_j=p=1/(N+1)$.
Since $p$ is simply a scaling factor
both for $\Gamma$ and in Eq. \rf{cond2},
 these conditions
remain valid after the replacement $N+1\to N$. This means
that the same measurement is optimal for $N$ states
$\rho_2$, \ldots, $\rho_{N+1}$
 and according to the main assumption,
 there exists an optimal measurement which
contains either two or three POVM elements.
\end{proof}

From this proposition we can extract the following
corollary:
\begin{cor}
For $N\ge4$ equiprobable states
$\rho_j$ given by their Bloch vectors $\vec\b_j$,
 the optimal measurement may
contain $M\ge4$ POVM elements only if there exists a subset
of $M$ states which form a POVM.
\end{cor}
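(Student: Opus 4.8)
The plan is to deduce the corollary directly from Proposition \ref{Nequalp} by passing to the subset of states that are actually detected. Suppose the optimal measurement for the $N$ equiprobable states contains $M\ge4$ nonzero elements $\hat\pi_j=\o_j\pi_j$ with $\o_j>0$; after relabelling, let these correspond to $\rho_1,\ldots,\rho_M$. By Proposition \ref{prop0}, the operators $\hat\pi_1,\ldots,\hat\pi_M$ form the optimal POVM for the $M$ states $\rho_1,\ldots,\rho_M$ carrying the equal probabilities $1/N$ (which sum to $M/N\le1$). Invoking the invariance of the optimality conditions \rf{cond2} under the scaling $p_j\to\a p_j$ noted in Sec. \ref{MinSubs}, with $\a=N/M$, I would rescale these to equal probabilities $1/M$ summing to one without altering the $\hat\pi_j$. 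This produces $M$ \emph{equiprobable} states discriminated by a genuinely $M$-element optimal POVM, which is precisely the setting of Proposition \ref{Nequalp} with $N$ replaced by $M\ge4$.

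With that reduction in hand, the argument is a trichotomy on Proposition \ref{Nequalp}. First I would rule out case (iii): if a vector $\vec B\ne0$ satisfying \rf{ntrB} existed for the $M$ states, then the mechanism in the proof of that proposition shows that under the admissible variation of the frequencies at least one $\o_j$ necessarily vanishes, so one can never keep all $M\ge4$ of them strictly positive. This contradicts $\o_j>0$ for every $j$, hence no such $\vec B$ exists and we are in the regime of parts (i)--(ii), where \rf{ntrB} has only the trivial solution. Next, case (i) is impossible, since a solution with $M$ elements exists by assumption, whereas (i) asserts that no $M$-element solution can exist when the states cannot form a POVM. The only remaining alternative is case (ii), whose hypothesis is exactly that the $M$ Bloch vectors $\vec\b_1,\ldots,\vec\b_M$ form a POVM, i.e. there are $\o_j\ge0$ with $\sum_j\o_j=2$ and $\sum_j\o_j\vec\b_j=0$; part (ii) further confirms $\vec\g_j=\vec\b_j$, so that $\pi_j=\rho_j$, consistent with the detected states themselves forming the POVM. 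This is the assertion of the corollary.

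The step I expect to be delicate is the elimination of case (iii): I must be sure that ``the optimal measurement may be realized with two or three elements'' genuinely forbids a competing solution in which all $M$ frequencies remain strictly positive. The safe route is not to quote the conclusion of (iii) as a black box, but to reuse its internal mechanism --- the monotone (linear) dependence of $\o_1,\o_2,\o_3$ on the remaining frequencies and the intermediate-value argument that forces one of them through zero --- now applied to the $M$-state system, thereby showing directly that $\o_j>0$ for all $j$ is untenable whenever \rf{ntrB} admits a nonzero solution. The remaining steps (the reduction via Proposition \ref{prop0} together with the scaling, and the exclusion of case (i)) are routine once this incompatibility is secured.
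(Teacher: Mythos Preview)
Your route --- reduce to the $M$ detected states via Proposition \ref{prop0} and scaling, then run the trichotomy of Proposition \ref{Nequalp} --- is genuinely different from the paper's. The paper instead argues constructively: starting from the full set, whenever only $\vec B=0$ solves \rf{ntrB} and the states fail to form a POVM, it descends to all $(N{-}1)$-subsets, then $(N{-}2)$-subsets, and so on; the moment some $M$-subset does form a POVM it \emph{exhibits} the optimal measurement explicitly ($\pi_j=\rho_j$ on that subset with the POVM weights, zero elsewhere), computes $\Gamma=pI$, and checks directly that $\tr G_j=p$ and $\det G_j=0$ for every state, so \rf{cond2} holds. Your approach is tidier conceptually; the paper's buys a self-contained verification that does not hinge on excluding case (iii).

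Your elimination of case (iii), however, has a genuine gap, and the fix you propose does not close it. The ``internal mechanism'' in the proof of Proposition \ref{Nequalp}(iii) shows only that \emph{if} an $M$-element optimal POVM with all $\o_j>0$ exists, one can slide the free frequencies until one of them reaches zero, thereby producing \emph{another} optimal POVM with fewer elements; it never excludes the original $M$-element one. Both can coexist: take four equiprobable states with $\vec\b_j=\tfrac12(\sin\theta\cos\phi_j,\sin\theta\sin\phi_j,\cos\theta)$, $\phi_j=j\pi/2$, $0<\theta<\pi/2$. A nonzero $\vec B=(0,0,\tfrac{p}{2}\cos\theta)$ solves \rf{ntrB}, so you are squarely in case (iii); yet $\o_j=\tfrac12$, $\vec\g_j=\tfrac12(\cos\phi_j,\sin\phi_j,0)$ is a bona fide $4$-element optimal POVM (Proposition \ref{prop1} gives $A=p(1+\sin\theta)>p$), while no four of the $\vec\b_j$ can satisfy $\sum_j\o_j\vec\b_j=0$ because their common $z$-component is positive. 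So the trichotomy cannot be closed by the mechanism you outline. The paper's own closing sentence about case (iii) is phrased with the same looseness, so you are in good company; but the heart of its proof is the explicit $\Gamma=pI$ verification, which does not depend on that clause.
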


\begin{proof}
Indeed, if for $N$ states
only $\vec B=0$ satisfies
 Eqs. \rf{ntrB}
 and the states $\rho_j$ cannot form a POVM,
 then one has to examine all
$N$ subsets of $N-1$ states and check
whether there exists a subset which may form a POVM.
If the answer is positive, for instance the vectors
$\rho_j$, $j=1,\ldots,N-1$,
 form a POVM,
i.e., there exist
numbers $\o_j\ge0$
such that $\sum_{j=1}^{N-1}\o_j\rho_j=I$
or equivalently
such that $\sum_{j=1}^{N-1}\o_j=2$ and
$\sum_{j=1}^{N-1}\o_j\vec\b_{j}=0$,
then
as a POVM for $N$ states,
 we may choose just these
frequencies together with the states $\pi_j=\rho_j$,
$j=1,\ldots,N$ and put $\o_{N}=0$.
For this POVM,
 the cost matrix \rf{Gam} reads
$\G=p\sum_{j=1}^{N}\o_j\rho_j^2=p\sum_{j=1}^{N-1}\o_j\rho_j=pI$.
Therefore, for any state $\rho$ given by the Bloch vector
$\vec\b=(\b_1,\b_2,\b_3)$ for the trace and determinant
of the matrix $G=\G-p\rho$,
 one gets $\tr G=p$,
$\mbox{det}\,G=p^2(1/4-\b_1^2-\b_2^2-\b_3^2)=0$.
This result means that the chosen POVM is optimal for
discriminating among all $N$ states $\rho_j$.

If the answer is negative, i.e.,
if such a subset of $N-1$ states does not exist,
 one has to examine all subsets of $N-2$ states, etc.
From here, it is clear that if there exists a subset of $M\le N$
states which may form a POVM,
 then the optimal POVM for
discriminating among $N$ states consists of the POVM thus
found supplemented by other states
(for $M<N$) with zero frequencies.
If for a subset of the states
there exists a vector $\vec B\ne0$ as a solution to
the Eqs. \rf{ntrB}, then according to the point (iii)
of the proposition \ref{Nequalp}, the optimal measurement
cannot contain more than three POVM elements.
\end{proof}

\subsubsection{Two from three states are equally likely}

Consider the case when $p_1=p_2=p$ and $p_3=1-2p$,
$0<p\le1/2$.
Proposition \ref{prop3} permits us to indicate the cases
when three states given with the above probabilities are
discriminated by the same projective measurement as two
states given with equal probabilities.
It is convenient to choose a frame such that
\ba{zz2}\nn
\vec\b_1&=&(\b_{1,1},0,\b_{1,3})\,,\quad
\vec\b_2=(-\b_{1,1},0,\b_{1,3})\,,\nn \\
\b_{1,1}&>&0\,,\quad \b_{1,3}=\sqrt{1/4-\b_{1,1}^2}>0\,.\nn
\ea
From Eq. \rf{A12}, we find $A_{1,2}=p\,(1+2\b_{1,1})\ge1-2p$
or equivalently $p\ge(3+2\b_{1,1})^{-1}$.
Another condition follows from
nonnegativity of $\mbox{det}\,G_{1,2;3}$ given in Eq. \rf{detG3} at
$j=3$
\ba{z3}\nn
\mbox{det}G_{1,2;3}&=&p\b_{1,1}(3p+2p\b_{1,1}-1)
\\
&&
+
2p(2p-1)(1/4-\b_{1,3}\b_{3,3})\ge0\,.\nn
\ea
Since both $\b_{1,1}$ and $\b_{1,3}$ are
assumed to be positive and hence
$1+3\b_{1,1}+2\b_{1,1}^2-4\b_{1,3}\b_{3,3}>0$
from here,
it follows that
\be{z4}\nn
p\ge\frac{1+2\b_{1,1}-4\b_{1,3}\b_{3,3}}%
{2(1+3\b{1,1}+2\b_{1,1}^2-4\b_{1,3}\b_{3,3})}>
\frac{1}{3+2\b_{1,1}}
\ee
In particular at $\b_{3,3}=1/2$,
$\b_{1,1}=\frac12\sin(2\theta)$, and
$\b_{1,3}=\frac12\cos(2\theta)$,
$0\le\theta\le\pi/2$,
the last inequality takes the form
\be{z5}\nn
p\ge\frac{2}{5+\cos(2\theta)+\sin(2\theta)}>
\frac{1}{3+\sin(2\theta)}
\ee
and one recognizes the condition found in \cite{Anderson}
for
distinguishing the optimal projective measurement from the
generalized one for
discriminating among three mirror symmetric states.

Another simplification occurs at $\b_{3,3}=0$,
\be{z6}\nn
\mbox{det}\,G_{1,2;3}=\frac{p}{2}(1+2\b_{1,1})(2p+2p\b_{1,1}-1)\,,
\ee
 which
leads to
\be{z7}\nn
p\ge\frac{1}{2+2\b_{1,1}}=\frac{1}{2+\sin(2\theta)}>
\frac{1}{3+\sin(2\theta)}\,.
\ee
Note that this case was previously studied numerically in
\cite{Czhec} and we find instructive to compare our
analytic solution with the numerical one.
For this purpose, we additionally put
$\b_{3,2}=0$, $\b_{3,1}=\frac12$, and $\b_{1,1}=-\b_{2,1}=\frac{b}{2}$,
$\b_{1,3}=\b_{2,3}=\frac12\sqrt{1-b^2}$, $b>0$.
Solving simultaneously Eqs. \rf{EBA2} and the compatibility
condition for the system \rf{om1},
 we find the vector
\be{z8}\nn
\vec
B=\left(\frac{A(3p-1)}{2p\sqrt{1-b^2}}\,,0\,,0\right),
\ee
which being put to Eq. \rf{EBA2} with $\a=1$
gives a second-order
equation with respect to $A$.
This equation has the roots $A=0$ and
\be{z9}\nn
A=\frac{2p^2(1-2p)(b^2-1)}{1-6p+p^2(8+b^2)}\,.
\ee
With these values of $B_j$ and $A$ we solve the system
of equations \rf{om1}, \rf{om2}
\ba{om123ch}
\o_{1,2}&=&\pm\frac{1+p[4(p-1)+b^2(5p-2)]}%
{b[1+p(-6+(8+b^2)p)]^2}
\nn \\
&\times&[-1+p(4-4p+b[\mp2+(\pm6+b)p])]\,,
\\
\o_{3}&=&2-\o_1-\o_2\,.
\ea
From here, it follows that the equation $\o_3=0$ has only one
real root inside the interval $0<p<1/2$, which is
$p=p_{r}=(2+b)^{-1}=(2+2\b_{1,1})^{-1}$ and if
$p>p_{r}$, $\o_3<0$.
The equation $\o_2=0$ has no real roots inside this interval
and the equation $\o_1=0$ has one root also
$p=p_{l}=(b-2+\sqrt{2b(1+b)})/(b^2+6b-4)$.
Thus, for $p_l<p<p_r$, these states are discriminated by a
generalized measurement.
Otherwise,
 they are discriminated by a
projective measurement.
In contrast to the paper \cite{Czhec},
 we give exact analytic
bounds for the probability $p$ distinguishing
optimal generalized
measurements from projective ones.

\subsection{N=4}

Let two states $\rho_1$ and $\rho_2$ be given with the
probability $p$ and two other states $\rho_3$ and $\rho_4$
with the probability $1/2-p$, $0<p<1/2$.
Let us choose the
coordinate system such that the
$z=0$ plane is the plane of the
 Bloch vectors $\vec\b_1$, $\vec\b_2$ and they are
 symmetrically disposed with respect to $x$ axis, i.e.,
 \be{z10}\nn
 \vec\b_1=(\b_{1,1},\b_{1,2},0)\,,\quad
\vec\b_2=(\b_{1,1},-\b_{1,2},0)\,,
\ee
and assume that $\b_{1,1}\ge0$ and $\b_{1,2}>0$.
Here,
we will consider the case where the vectors $\vec\b_3$
and $\vec\b_4$ are placed into the $x=0$ plane and
symmetrically disposed with respect to $z$ axis,
i.e.,
 \be{z11}\nn
 \vec\b_3=(0,\b_{3,2},\b_{3,3})\,,\quad
\vec\b_2=(0,\b_{3,2},-\b_{3,3})
\ee
with $\b_{3,2}\ge0$ and $\b_{3,3}>0$.

From the system \rf{EB}, we find both the vector
$\vec B=(B_1,0,0)$
and the optimal success probability $P_{corr}=A$,
\be{Aex4}
B_1=\frac{2p\b_{1,1}(1-6p+8p^2)}{1-8p+16p^2(1-\b_{1,1}^2)}\,,
\quad
A=\frac{8p\b_{1,1}}{4p-1}B_1\,,
\ee
if for
discrimination of these states,
 one needs four detection
operators

From here,
 we see that $P_{corr}$ does not depend on the
states $\rho_3$ and $\rho_4$.
In particular, these states may be orthogonal to each
other ($\b_{3,2}=0$) in which case they
correspond
to opposite
points of the Bloch sphere.
With these values of $B_1$, $B_2$, $B_3$, and $A$, from the system
of equations \rf{om1} and \rf{om2}, we find the frequencies
\ba{o12ex4}\nn
\o_{1,2}=\pm\frac{1+4p\,[p-1+4(3p-1)\b_{1,2}^2]}%
{\b_{1,2}[1+4p\,(-2+p\,[3+4\b_{1,2}^2])]^2}\,\,\\
\times\left[
\pm\,4p\,(4p-1)\b_{1,2}+(-1+4p\,[1-4p\,\b_{1,1}^2]\,)\b_{3,2}
\right],\,\,
\ea
\be{o3ex4}
\o_{3}=\o_4=1-\frac12(\o_1+\o_2).
\ee
Note that for $\b_{3,2}=0$,
when the states $\rho_3$ and $\rho_4$ are orthogonal to
each other,
 $\o_1=\o_2$.
From here it follows that for a fixed value of $\b_{1,1}$
(and hence $\b_{1,2}$), $\o_4$ is not positive,
 $\o_4\le0$, if
\be{z12}\nn
p\ge\frac{1-2\b_{1,2}}{8\b_{1,1}^2}:=p_r
\ee
 and
$\o_1>0$ if
\be{z13}\nn
p>\frac{\b_{1,2}-\b_{3,2}+
\sqrt{\b_{1,2}(\b_{1,2}+2\b_{3,2}+4\b_{1,2}\b_{3,2}^2)}}%
{8\b_{1,2}-2\b_{3,2}+8\b_{1,2}^2\b_{3,2}}:=p_{\,l}
\ee
whereas $\o_2>0$ inside the interval $p_l<p<p_r$.
Note that for $\b_{1,2},\b_{3,2}\in[0,1/2]$,
 the difference
$p_r-p_l$ is not negative, $p_r-p_{\,l}\ge0$.
Thus,
for the $p$ values from the interval $p\in(p_l,p_r)$,
 the states are discriminated with four
 detection operators
 $\hat\pi_j=\o_j\pi_j$, $j=1,\ldots,4$.
 For $p\ge p_r$, they are discriminated by a projective
 measurement which optimally distinguishes $\rho_1$ from
 $\rho_2$.
 For the success probability,
  one finds from Eq. \rf{Aex4} at $p=p_r$
 the value $P_{corr}=1/2$.
 Note that Eqs. \rf{s12} and \rf{As12} at $p_1=p_2=p_r$
yield the same value for the success probability.
With $p$ growing from $p_r$ up to $1/2$, the success
probability linearly increases until the value
$1/2+\b_{1,2}$.

At $p=p_{\,l}$, $\o_1$ vanishes, $\o_1=0$, and
for $p\le p_{\,l}$,
the states are optimally
discriminated by the same measurement which optimally
discriminates among the states
$\rho_2$, $\rho_3$, and $\rho_4$.
From Eqs. \rf{o12ex4} and \rf{o3ex4} at $p=p_l$,
 we obtain corresponding frequencies
\be{o34}
\o_{3}=\o_4=\frac{\b_{1,2}+\b_{3,2}}%
{\b_{1,2}+2\b_{3,2}+4\b_{1,2}\b_{3,2}^2}\,,\quad
\o_2=2-2\o_3\,.
\ee
Note that for  $0<\b_{1,2},\b_{3,2}<1/2$ all the frequencies
\rf{o34} are positive.
On the other hand, for $\b_{3,2}=0$, when
the state $\rho_3$ is orthogonal to $\rho_4$,
$p_{\,l}=1/4$,
$\o_3=\o_4=1$, and $\o_1=\o_2=0$.
The success probability in this case is $P_{corr}=A=1/2$.
Note also that the same value follows from Eqs.
\rf{A} and \rf{Dcov} at
$p_1=p_2=1/4$ and $\vec\b_1\cdot\vec\b_2=-1/4$.
When $p\to0$,
the success probability tends linearly to zero.


\section{Conclusion}

The known necessary and sufficient conditions for
discriminating between nonorthogonal qubit states
with the minimum error
are
formulated in terms of Bloch vectors representing the
states.
This permitted us to indicate an algorithmic
solution to the direct optimization problem and give a
complete solution to the inverse optimization problem.
By the direct optimization problem,
 we mean the problem of
finding the optimal measurement strategy when the states to
be discriminated are given together with their prior
probabilities.
Accordingly, the inverse
optimization problem is the problem of specifying all
possible states and their prior probabilities which may be
optimally discriminated by the given generalized
measurement.

An intermediate (or mixed)
optimization problem may also be important
for practical usage. We formulate it as follows.
Assume that the minimum-error discrimination strategy for $N$
given states is known.
Is it possible to enlarge this set
of states in such a way that they are optimally
discriminated by the same measurement strategy?
If yes, what are
 the conditions to which the additional states should satisfy?
A simple analysis of the necessary and sufficient
conditions for minimizing the generalized rate error
tells us that the answer to this question is positive.
We formulated conditions in terms of Bloch vectors when
a set of $N$ qubit states is discriminated by the
same orthogonal measurement which is optimal for
discriminating among two states given with arbitrary prior
probabilities.
We applied this result to formulate  conditions when
three qubit states are optimally discriminated either by the
generalized measurement or by the projective one.
In particular for $N$ equiprobable states,
 we have shown
that the optimal generalized measurement may contain
$M\ge4$ POVM elements only if the states themselves may
form a POVM.
Otherwise, it contains either three or two POVM
elements.
For three equiprobable states, we found conditions which
distinguish projective optimal measurements from
generalized ones.

To illustrate our approach,
 we considered an example of three states,
 two of which are equally likely.
 For a
particular case of three mirror symmetric states,
 we reproduced a result previously published in
\cite{Anderson}.
To illustrate the advantage of the analytic solution
compared to a numerical one,
we considered another particular case
previously reported in \cite{Czhec}.
For this case in contrast to \cite{Czhec},
we were able to present the strict analytic bounds
for the prior probability for distinguishing the optimal
generalized measurement from the projective one.
As the final illustration of our approach ,
we solved
the optimization problem for four qubit states,
 two of
which are given with the probability $p$ and two others
with the probability $1/2-p$.

\section*{Acknowledgments}

The work is partially supported
by President of Russia under Grant No. SS-871.2008.2,
 Russian
Science and Innovations Federal Agency under contract No.
02.740.11.0238,
and Russian Special Federal Program
under contract No. P-1337.
The author is grateful to J. Tyson for a useful
communication.

\end{document}